\def\BibTeX{{\rm B\kern-.05em{\sc i\kern-.025em b}\kern-.08em
    T\kern-.1667em\lower.7ex\hbox{E}\kern-.125emX}}
\newtheorem{definition}{Definition}
\newtheorem{theorem}{Theorem}
\newtheorem{lemma}{Lemma}
\newtheorem{proposition}{Proposition}
\newtheorem{remark}{Remark}
\newcommand{\sx}{\mkern-5mu}
\begin{document}

\title{Fourth-Order Hierarchical Array: A Novel Scheme for Sparse Array Design Based on Fourth-Order Difference Co-Array}

\author{Si Wang and Guoqiang Xiao
\thanks{
Manuscript received xx August 2025.
This work was supported in part by the National Natural Science Foundation of China under Grant 62371400.
The associate editor coordinating the review of this manuscript and approving it for publication was xxx.
 (Corresponding author: Guoqiang Xiao.)}
\thanks{Si Wang and Guoqiang Xiao are with the College of Computer and Information Science, Southwest University, Chongqing 400715, China
 (e-mail:hw8888@email.swu.edu.cn; gqxiao@swu.edu.cn).}
}

\markboth{ A Novel Scheme for Sparse Array Design Based on Fourth-Order Difference Co-Array,~Vol.~xx, No.~x, August~2025}%
{WNAG \MakeLowercase{\textit{et al.}}: A Novel Scheme for Sparse Array Design Based on Fourth-Order Difference Co-Array}

\maketitle

\begin{abstract}
Conventional array designs based on circular fourth-order cumulant
typically adopt a single expression form of the fourth-order difference co-array (FODCA),
which limits the achievable degrees of freedom (DOFs) and neglects the impact of mutual coupling among physical sensors.
To address above issues, this paper proposes a novel scheme to design arrays with increased DOFs
by combining different forms of FODCA while accounting for mutual coupling.
A novel fourth-order hierarchical array (FOHA) based on different forms of FODCA
is constructed using an arbitrary generator set.
The analytical expression between the coupling leakage of the generator and the resulting FOHA is derived.
Two specific FOHA configurations are presented with closed-form sensor placements.
The arrays not only offer increased DOFs for resolving more sources in direction-of-arrival (DOA) estimation
but also effectively suppress mutual coupling.
Moreover, the redundancy of FODCA is examined,
and it is shown that arrays based on the proposed scheme
achieve lower redundancy compared to existing arrays based on FODCA.
Meanwhile, the necessary and sufficient conditions for signal reconstruction by FOHA are derived.
Compared with existing arrays based on FODCA, the proposed arrays provide enhanced DOFs
and improved robustness against mutual coupling.
Numerical simulations verify that FOHAs achieve superior DOA estimation performance compared with other sparse linear arrays.
\end{abstract}

\begin{IEEEkeywords}
    Sparse linear array, fourth-order cumulant, mutual coupling, redundancy, direction of arrival estimation.
\end{IEEEkeywords}

\section{Introduction}
\IEEEPARstart{L}{ow}-cost sampling in intelligent perception systems has been widely applied in various fields,
such as frequency estimation in the time domain \cite{Xiao2017notes},\cite{Xiao2018robustness},\cite{Xiao2016symmetric},\cite{Xiao2021wrapped},\cite{Xiao2023on}
and direction-of-arrival (DOA) estimation in the spatial domain.
The paper primarily focuses on DOA estimation in array signal processing,
a fundamental problem that has been extensively studied for several decades \cite{Krim1996},\cite{Godara1997},\cite{Tuncer2009},\cite{Xiao22023}.
It is well known that a uniform linear array (ULA) with $N$ sensors can estimate at most
$N-1$ sources using classical methods such as MUSIC \cite{Schmidt1986} and ESPRIT \cite{Roy1989}.
To increase the degrees of freedom (DOFs) of a ULA, more sensors must be deployed, which significantly increases hardware cost and complexity.
Moreover, ULAs are susceptible to mutual coupling effects due to the close spacing between adjacent sensors.
In contrast, nonuniform linear arrays, commonly referred to as sparse linear arrays (SLAs), offer an effective solution to these limitations.
For a SLA with $N$ physical sensors, the corresponding difference co-array (DCA) can be formed using the second-order cumulants of the received signals.
The DCA can yield up to $\mathcal{O}(N^2)$ consecutive lags \cite{Pal2010},\cite{Pal2011}, thereby substantially enhancing the DOFs compared to traditional ULAs.
Furthermore, the increased inter-sensor spacing in SLAs can mitigate mutual coupling effects \cite{BD2017mutual}.

The minimum redundancy array (MRA) \cite{Moffet1968} is a foundational structure designed to maximize the DOFs by minimizing sensor redundancy.
However, its lack of closed-form expressions for sensor locations limits scalability and complicates the design of large-scale arrays.
This drawback has motivated extensive research on nested arrays \cite{Pal2010} and coprime arrays \cite{Pal2011},
both of which offer significant advantages by providing closed-form sensor configurations.
The success of nested and coprime arrays has spurred further developments aimed at enhancing DOFs, such as augmented coprime arrays \cite{Pal22011},
enhanced nested arrays \cite{Zhao2019} and arrays designed based on the maximum element spacing criterion \cite{Zheng2019}.
In the context of DOA estimation,
traditional subspace-based methods \cite{Liu2015} rely solely on the consecutive lags of the DCA,
rendering hole-free DCA configurations particularly beneficial.
As a result, hole-filling strategies have been proposed to construct new coprime-like arrays with hole-free DCAs \cite{Wang2019}.
Clearly, various DCA-based array configurations leveraging second-order cumulants have been extensively studied for DOA estimation
due to their ability to significantly enhance the achievable DOFs \cite{Pal2010},\cite{Pal2011},\cite{Zhao2019},\cite{Zheng2019},\cite{Shi2022}.

Previous studies based on second-order statistics lead to the construction of DCAs.
In contrast, high-order cumulant methods, particularly those using fourth-order cumulants (FOC),
have been widely explored for DOA estimation \cite{Xiao2023}, \cite{Guo2024},\cite{Shen2016},\cite{Nikias1993},\cite{Dongan1995},\cite{Chevalier2005},\cite{Shen2019},\cite{Cohen2019}.
Compared to methods based on second-order cumulant,
methods based on FOC provide $\mathcal{O}(N^4)$ virtual sensors with $N$ physical sensors that is much higher than the method based on second-order DCA,
larger virtual apertures, higher resolution, improved accuracy and greater robustness to Gaussian noise \cite{Piya2012},\cite{Dongan1995},\cite{Gonen1999}.
Although they involve higher computational complexity and are ineffective for Gaussian sources, many real-world signals are non-Gaussian \cite{Nikias1993},
making methods based FOC particularly appealing in such scenarios.

For example, the four-level nested array in \cite{Piya2012} is constructed based on the fourth-order difference co-array (FODCA)
of the form $(a - b) - (c - d)$, which generalizes the two-level nested structure originally introduced in \cite{Shen2016}.
An enhanced and simplified version of the four-level nested array is later proposed in \cite{Shen2019},
also based on the FODCA of form $(a - b) - (c - d)$, resulting in increased DOFs.
In \cite{Guo2024}, a new fourth-order sparse array, termed the sum-difference-FODC,
is proposed by incorporating the second-order SCA and DCA into the FODCA of form $(a-b)-(c-d)$.
This design consists of a sparse subarray and another subarray with extended sensor spacing.
By jointly exploiting the second-order SCA and DCA, the resulting FODCA achieves a higher number of consecutive lags.
To further enhance the DOFs, \cite{CaiJJ2017} introduces an expansion and shift scheme
based on the alternative FODCA of form $(a - b) + (c - d)$.
In addition, \cite{ChenH2025} presents a sparse array from a sum-difference co-array perspective,
also based on the form $(a - b) - (c - d)$, to construct arrays with large DOFs.


However, there are limitations remain in the existing studies on SLAs design based on FODCA, as outlined below.

\emph{1)}
Most existing sparse arrays are designed based on a single expression of the FODCA,
either $(a + b) - (c + d)$ or $(a - b) + (c - d)$, derived from circular and symmetric fourth-order cumulant.
This constraint limits the flexibility in selecting sensor position combinations to form FODCA elements.
Consequently, the maximum achievable DOFs of sparse arrays based on FODCA are inherently restricted and cannot be further improved under the current design paradigm.
Whether jointly incorporating both FODCA forms, $(a + b) - (c + d)$ and $(a - b) + (c - d)$, can effectively expand the aperture of the virtual co-array remains an open question. This issue has not been fully addressed in the existing literature
\cite{Xiao2023}, \cite{Guo2024}, \cite{Shen2016}, \cite{Cohen2020}, \cite{Zhou2020}, \cite{DengJ2020},\cite{LiuJHZ2007},\cite{WangB2020}.

\emph{2)}
Most studies on sparse arrays based on FODCA primarily aim to enhance the DOFs,
while overlooking the impact of mutual coupling effects between physical sensors \cite{Xiao2023}, \cite{Guo2024}, \cite{Shen2016}, \cite{Shen2019}, \cite{Cohen2019}, \cite{ChenYP2023}.
However, in practical scenarios, mutual coupling among physical sensors can significantly degrade the DOA estimation performance.

To significantly increase the DOFs of an array with a fixed number of physical sensors based on FOC,
this paper proposes a novel expansion scheme for designing a fourth-order hierarchical array
based on two different forms of FODCA (FOHA).
The proposed scheme leverages two distinct forms of the FODCA, both derived from the received signal data.
Since the effectiveness of FODCA depends on non-zero FOC, the underlying signal is assumed to be non-Gaussian \cite{Guo2024}, \cite{Shen2016}, \cite{Shen2017}, \cite{Zhou2020}.
Under above assumption, the proposed design explores the joint use of both forms of FODCA to construct sparse arrays,
with the goals of improving resolution, extending the virtual aperture and DOFs, and enhancing robustness against mutual coupling and noise.
Furthermore, two specific sparse linear arrays, FOHA(NA) and FOHA(CNA), are developed.
These configurations feature hole-free virtual co-arrays and admit closed-form sensor location expressions,
making them suitable for practical implementation and analytical evaluation.


\emph{Contribution:}
This paper focuses on the design of SLAs for non-Gaussian signals,
aiming to construct hole-free FODCA with enhanced DOFs and reduced mutual coupling.
The main contributions of the paper are summarized as follows:

$\bullet$
A novel scheme is proposed to significantly enhance the DOFs
by jointly utilizing two different forms of FODCA.
The novel scheme enables the construction of FOHA based on any arbitrary generator,
offering broad applicability and flexibility in array design.

$\bullet$
Based on the proposed scheme, two novel arrays, FOHA(NA) and FOHA(CNA),
are constructed using nested array (NA) and concatenated nested array (CNA) as generators.
Each consists of three linear subarrays arranged side-by-side, with sensor positions given by closed-form formulas.
The maximum DOFs are analytically derived by optimizing sensor placement across the subarrays.
These configurations achieve significantly higher DOFs than existing SLAs based on FODCA \cite{Guo2024}, \cite{Piya2012}, \cite{Shen2019}, \cite{Yang2023}.

$\bullet$
A necessary and sufficient condition for signal reconstruction is the existence of a positive integer $k$.
This integer must be at least twice the ratio of the least common multiple of all sensor positions to the wavelength.
Furthermore, the necessary and sufficient condition for signal reconstruction based on FOHA is also derived.

$\bullet$
Moreover, the redundancy expressions for the proposed FOHA(NA) and FOHA(CNA) are analytically derived,
offering further insights into their design advantages in terms of physical sensor economy.
In addition, the expression between the coupling leakage of the generator array
and the resulting FOHA is explicitly derived.
Leveraging the scaling factors $\lambda_1$ and $\lambda_2$, the generated arrays exhibit increased sparsity,
which reduces coupling leakage and enhances robustness against mutual coupling effects.

This paper is organized as follows.
Section II reviews the general sparse array model and mutual coupling model,
and introduces the formulation of FODCA.
Section III presents a novel scheme for constructing FOHA based on a generator array,
along with the conditions for achieving a hole-free FODCA.
In Section IV, two specific FOHA configurations, FOHA(NA) and FOHA(CNA),
are proposed using NA and CNA as generators, respectively.
Their hole-free properties and maximum achievable DOFs are analyzed.
Section V derives the necessary and sufficient conditions for signal reconstruction based on FOHA.
Sections VI and VII analyze the lower bounds of fourth-order redundancy and mutual coupling leakage in FOHA arrays, respectively.
Section VIII demonstrates the advantages of FOHA(NA) and FOHA(CNA) in terms of DOFs enhancement and improved robustness to mutual coupling.
Their DOFs and coupling leakage performance are compared with six existing FODCA configurations.
Numerical simulations are also conducted to evaluate RMSE performance under varying SNR levels, snapshot numbers and source counts, both with and without mutual coupling.

\textit{Notations:}
$\mathbb{P}$ is the physical sensor positions set of a SLA.
$N$ is the number of sensors. $D$ is the number of source signals to be estimated. $K$ is the number of snapshots.
$\Phi^u$ is sensor positions set of a FODCA.
The operators $\otimes$, $\odot$, $(\cdot)^T$, $(\cdot)^H$ and $(\cdot)^*$ stand for the Kronecker products,
Khatri-Rao products,
transpose, conjugate transpose and complex conjugation, respectively.
Set $\{a : b : c \}$ denotes the integer line from $a$ to $c$ sampled in steps of $b\in \mathbb{N}^+$.
When $b=1$, we use shorthand $\{a : c \}$ .

\section{Preliminaries}

\subsection{General Sparse Array Model}
Assume that there are $D$ non-Gaussian and mutually uncorrelated far-field narrow band signals.
The incident angle of the $i^{th}$ signal is $\theta_i$,
and the physical sensor positions set of the SLA is represented as $\mathbb{P}d=\{p_k|k\in[1,N]\}d$,
where $\mathbb{P}=\{ p_1,p_2,...,p_N \}$ is the set of sensor position normalized by the unit spacing $d$,
which is generally set to half wavelength.
The array output at the $n^{th}$ physical sensor corresponding to the $t^{th}$ snapshot,
denoted as $x_n(t)$, can be expressed as follows
\begin{equation}
\label{w1}
\begin{aligned}
x_n(t)=\sum_{i=1}^Da_n(\theta_i)s_i(t)+v_n(t),
\end{aligned}
\end{equation}
where $a_n(\theta_i)$ denotes the steering response of $n^{th}$ physical sensor corresponding to the $i^{th}$ source signal,
which can be expressed as follows
\begin{equation}
\label{w8}
a_n(\theta_i)=e^{j\frac{2\pi p_{n} d}{\lambda}sin(\theta_i)}.
\end{equation}

Further, $v_n(t)$ denotes a zero-mean additive Gaussian noise sample at the $n^{th}$ physical sensor, which is assumed to be
statistically independent of all the source signals. Thus, the received signals for all the $N$ physical sensors, which are denoted as
$\boldsymbol{x}(t)=[x_1(t),...,x_N(t)]^T$, can be written as follows
\begin{equation}
\label{w2}
\boldsymbol{x}(t)=\sum_{i=1}^D\boldsymbol{a}(\theta_i)s_i(t)+\boldsymbol{v}(t)=\boldsymbol{A}(\theta)\boldsymbol{s}(t)+\boldsymbol{v}(t),
\end{equation}
where $\boldsymbol{s}(t) = [s_1(t), \dots, s_D(t)]^T$ denotes the vector of non-Gaussian source signals,
$\boldsymbol{v}(t) = [v_1(t), \dots, v_N(t)]^T$ denotes the vector of additive Gaussian noise,
and $\boldsymbol{a}(\theta_i) = [a_1(\theta_i), \dots, a_N(\theta_i)]^T$ denotes the array steering vector corresponding to the $i^{\text{th}}$ source signal.
The array manifold matrix is defined as $\boldsymbol{A}(\theta) = [\boldsymbol{a}(\theta_1), \dots, \boldsymbol{a}(\theta_D)] \in \mathbb{C}^{N \times D}$.
By setting $p_1 = 0$, the steering vector $\boldsymbol{a}(\theta_i)$ can be rewritten as
\begin{equation}\nonumber
\boldsymbol{a}(\theta_i)=[1,e^{-j\frac{2\pi p_2d\sin\theta_i}{\lambda}},...,e^{-j\frac{2\pi p_Nd\sin\theta_i}{\lambda}}]^T.
\end{equation}

\subsection{Mutual Coupling Model}
There exists the mutual coupling effects among the physical sensors in practical applications.
When considering the effect of mutual coupling,
the received signal vector in (\ref{w2}) can be rewritten as
\begin{equation}
\label{wang5}
\boldsymbol{x}(t)=\boldsymbol{C}\boldsymbol{A}(\theta)\boldsymbol{s}(t)+\boldsymbol{v}(t),
\end{equation}
where $\boldsymbol{C}$ is the $N \times N$ mutual coupling matrix.
In general, the expression for $\boldsymbol{C}$ is typically quite complicated~\cite{LiuJ2017}.
However, under the ULA configuration, $\boldsymbol{C}$ can be approximated by a
$B$-banded symmetric Toeplitz matrix, as shown below~\cite{Friedlander1991},~\cite{Svantesson19991},~\cite{Svantesson19992}.
\begin{equation}
\boldsymbol{C}(p_{k_1},p_{k_2})=
\begin{cases}
c_{|p_{k_1}-p_{k_2}|},\ \ \ \ if\ |p_{k_1}-p_{k_2}|\leq B,\\
0,\ \ \ \ \ \ \ \ otherwise,
\end{cases}
\end{equation}
where $p_{k_1},p_{k_2}\in\mathbb{P}$, and $c_0,c_1,...,c_B$ are coupling coefficients satisfying $c_0=1>|c_1|>|c_2|>...>|c_B|$ \cite{Friedlander1991}.
For a given array with $N$ sensors, the coupling leakage is defined as the energy ratio \cite{Zheng2019}
\begin{equation}
\label{w24}
L=\frac{\|\boldsymbol{C}-diag(\boldsymbol{C}) \|_F}{\|\boldsymbol{C} \|_F},
\end{equation}
a small value of $L$ implies that the mutual coupling is less significant.

\subsection{The Fourth-Order Cumulant and Virtual Array}
Next, we analyze the fourth-order circular cumulant of the received signals indexed by $k_1, k_2, k_3, k_4 \in [1, N]$.
According to  \cite{DengJ2020}, \cite{Yang2023}, the fourth-order circular cumulant of $x_{k_1}(t)$, $x_{k_2}(t)$, $x_{k_3}(t)$ and $x_{k_4}(t)$ is can be expressed as following two forms
\begin{equation}
\label{eq1}
\begin{aligned}
&\text{cum}(x_{k_1} (t), x_{k_2}(t), x_{k_3}^*(t), x_{k_4}^*(t))\\
&:=\mathbb{E}[x_{k_1}\sx(t)x_{k_2}\sx(t)x_{k_3}^*\sx(t)x_{k_4}^*\sx(t)]
-\mathbb{E}[x_{k_1}\sx(t)x_{k_2}\sx(t)]\mathbb{E}[x_{k_3}^*\sx(t)x_{k_4}^*\sx(t)]\\
&-\mathbb{E}[x_{k_1}\sx(t)x_{k_3}^*\sx(t)]\mathbb{E}[x_{k_2}\sx(t)x_{k_4}^*\sx(t)]
-\mathbb{E}[x_{k_1}\sx(t)x_{k_4}^*\sx(t)]\mathbb{E}[x_{k_2}\sx(t)x_{k_3}^*\sx(t)]\\
&\text{cum}(x_{k_1}(t), x_{k_2}^*(t), x_{k_3}(t), x_{k_4}^*(t))\\
&:=\mathbb{E}[x_{k_1}\sx(t)x_{k_2}^*\sx(t)x_{k_3}\sx(t)x_{k_4}^*\sx(t)]
-\mathbb{E}[x_{k_1}\sx(t)x_{k_2}^*\sx(t)]\mathbb{E}[x_{k_3}\sx(t)x_{k_4}^*\sx(t)]\\
&-\mathbb{E}[x_{k_1}\sx(t)x_{k_3}\sx(t)]\mathbb{E}[x_{k_2}^*\sx(t)x_{k_4}^*\sx(t)]
-\mathbb{E}[x_{k_1}\sx(t)x_{k_4}^*\sx(t)]\mathbb{E}[x_{k_2}^*\sx(t)x_{k_3}\sx(t)].
\end{aligned}
\end{equation}

According to Theorem 1 in \cite{Piya2012} and assuming independence between the source signals and the noise,
the fourth-order cumulant in (\ref{eq1}) simplifies to
\begin{equation}
\label{eq2}
\begin{aligned}
&\text{cum}(x_{k_1}(t), x_{k_2}(t), x_{k_3}^*(t), x_{k_4}^*(t))\\
&=\sum_{i=1}^De^{-j\frac{2\pi d(p_{k_1}+p_{k_2}-p_{k_3}-p_{k_4})}{\lambda}\sin(\theta_i)}c_{4,s_i}^1,\\
&\text{cum}(x_{k_1}(t), x_{k_2}^*(t), x_{k_3}(t), x_{k_4}^*(t))\\
&=\sum_{i=1}^De^{-j\frac{2\pi d(p_{k_1}-p_{k_2}+p_{k_3}-p_{k_4})}{\lambda}\sin(\theta_i)}c_{4,s_i}^2,
\end{aligned}
\end{equation}
where $c_{4,s_i}^1 := \text{cum}(s_i(t), s_i(t), s_i^*(t), s_i^*(t))$ and
$c_{4,s_i}^2 := \text{cum}(s_i(t), s_i^*(t), s_i(t), s_i^*(t))$ denote the fourth-order cumulant of the $i$th source.
Note that the expression in (\ref{eq2}) resembles the array model in (\ref{eq1}).
The term $p_{k_1} + p_{k_2} - p_{k_3} - p_{k_4}$ and $p_{k_1} - p_{k_2} + p_{k_3} - p_{k_4}$
in (\ref{eq2}) can be interpreted as the position of a sensor in FODCA.
Motivated by this observation, we define the FODCA as follows, in accordance with Definition~1 in \cite{Piya2012}.
\begin{definition}
For a linear array $\mathbb{P}=\{p_1, p_2,..., p_N\}$,
its FODCA is defined as
\begin{equation}
\label{eq3}
\begin{aligned}
\Delta_4(\mathbb{P})\sx &:=\sx \{(p_{k_1} +p_{k_2})-(p_{k_3}+p_{k_4}) | k_1,k_2,k_3,k_4\in [1,N]\}\\
\sx &\cup \{(p_{k_1} -p_{k_2})+(p_{k_3}-p_{k_4}) |  k_1,k_2,k_3,k_4\in [1,N]\}.
\end{aligned}
\end{equation}
\end{definition}


\section{A Novel Design Scheme for Fourth-Order Hierarchical Array}
\subsection{ The Novel Scheme for FOHA}
The purpose of designing sparse array structures is to enhance DOFs.
Firstly, criteria in sparse array designing are described.

\emph{Criterion 1 (Large consecutive lags of DCA)}:
The large consecutive lags of DCA are preferred, which can not only increase the number of resolvable sources
but also lead to higher spatial resolution in DOA estimation \cite{Pal2010}, \cite{Piya2012}, \cite{Shen2019}.

\emph{Criterion 2 (Closed-form expressions of sensor positions)}:
A closed-form expression of sensor positions is preferred for scalability considerations \cite{Cohen2019}, \cite{Cohen2020}.

\emph{Criterion 3 (Hole-free DCA)}:
A SLA with a hole-free DCA is preferred,
since the data from its DCA can be utilized directly by subspace-based DOA estimation methods
which are easy to be implemented, and thus the algorithms based on compressive sensing \cite{Shen2017}
or co-array interpolation techniques \cite{Cui2019}, \cite{Zhou2018} with increased computational complexity can be avoided \cite{Cohen2020}, \cite{Liu20172}.

Secondly, consider the case where $\mathbb{P}$ is decomposed into three disjoint subsets as
$\mathbb{P} = \mathbb{A}_1 \cup \mathbb{A}_2 \cup \mathbb{A}_3$.
Since the elements in $\Delta_4(\mathbb{P})$ are symmetric with respect to the origin,
only the non-negative part of $\Delta_4(\mathbb{P})$ is considered in the following analysis.
Furthermore, the FODCA, $\Delta_4(\mathbb{P})$, is closely related to second-order and third-order co-arrays.
The second-order and third-order DCA of an array $\mathbb{P}$ are defined as follows
\begin{equation}
\begin{aligned}\nonumber
&\Delta_2(\mathbb{P}):=\mathbb{P}-\mathbb{P}=\{ m-n\ |\ m,n \in \mathbb{P}  \},\\
&\Delta_{31}(\mathbb{P}):=\mathbb{P}+\mathbb{P}-\mathbb{P}=\{ m+n-z\ |\ m,n,z \in \mathbb{P}  \},\\
&\Delta_{32}(\mathbb{P}):=\mathbb{P}-\mathbb{P}-\mathbb{P}=\{ m-n-z\ |\ m,n,z \in \mathbb{P}  \}.
\end{aligned}
\end{equation}

The second-order SCA of an array $\mathbb{P}$ is defined as
\[
\Sigma_2(\mathbb{P}):=\mathbb{P}+\mathbb{P}=\{ m+n\ |\ m,n \in \mathbb{P}  \}.
\]

To improve the continuity of $\Delta_4(\mathbb{P})$, we first aim to maximize the cardinalities of the second-order co-arrays.
\begin{equation}
\begin{aligned}
\label{eq:delta2_sigma2}
& \Delta_2(\mathbb{A}_1) := \{p_{k_1} - p_{k_2} \mid p_{k_1}, p_{k_2} \in \mathbb{A}_1 \}, \\
&\Sigma_2(\mathbb{A}_1) := \{p_{k_1} + p_{k_2} \mid p_{k_1}, p_{k_2} \in \mathbb{A}_1 \}.
\end{aligned}
\end{equation}

Next, for each $p_{k_3} \in \mathbb{P}_2$, we consider the sets
\begin{equation}\nonumber
\begin{aligned}
\Delta_{31}(\mathbb{A}_1, \mathbb{A}_2) &:= \{p_{k_3} - a \mid a \in \Delta_2,\ p_{k_3} \in \mathbb{A}_2 \}, \\
\Delta_{32}(\mathbb{A}_1, \mathbb{A}_2) &:= \{p_{k_3} - b \mid b \in \Sigma_2,\ p_{k_3} \in \mathbb{A}_2 \},
\end{aligned}
\end{equation}
which can extend the non-negative support of $\Delta_4(\mathbb{P})$ to the interval $[0,\ p_{k_3}]$.

For each element \( p_{k_4} \in \mathbb{A}_3 \), additional coverage in the non-negative region is obtained by
\begin{equation}\nonumber
\begin{aligned}
\Delta_4(\mathbb{P}) &:= \left\{p_{k_4} - e\ | \ e \in \Delta_{31},\ p_{k_4} \in \mathbb{A}_3 \right\} \\
 &\cup  \left\{p_{k_4} + f\ | \ f \in \Delta_{32},\ p_{k_4} \in \mathbb{A}_3 \right\}.
\end{aligned}
\end{equation}

Similarly, for each \( p_{k_4} \in \mathbb{A}_3 \), coverage in the negative region is given by
\begin{equation}\nonumber
\begin{aligned}
\Delta_4(\mathbb{P})&:= \left\{-\left[p_{k_4} - e \right]\right\}=\left\{-p_{k_4} + p_{k_3} + p_{k_1} - p_{k_2} \right\}\\
&\cup \left\{-\left[p_{k_4} + f \right]\right\}= \left\{-p_{k_4} - p_{k_3} + p_{k_1} + p_{k_2} \right \},
\end{aligned}
\end{equation}
ensuring symmetry in \(\Delta_4(\mathbb{P})\); that is, each value appears together with its negative counterpart, maintaining the structural balance of \(\Delta_4(\mathbb{P})\).

In addition, to lighten the notations, given any three sets $\mathbb{P}$, $\mathbb{P}'$ and $\mathbb{P}''$ \cite{Xiao2023}, we use
\begin{equation}\nonumber
\begin{aligned}
&\mathbb{C}(\mathbb{P},\mathbb{P}')=\{ p_{k_1}+p_{k_2}\ | \ p_{k_1}\in \mathbb{P},p_{k_2}\in \mathbb{P}' \},\\
&\mathbb{C}(\mathbb{P},\mathbb{P}',\mathbb{P}'')\sx=\sx \{ p_{k_1}\sx+\sx p_{k_2}\sx+\sx p_{k_3}\ | \ p_{k_1}\in \mathbb{P},p_{k_2}\in \mathbb{P}',p_{k_3}\in \mathbb{P}'' \},
\end{aligned}
\end{equation}
to denote the cross sum of elements from $\mathbb{P}$, $\mathbb{P}'$ and $\mathbb{P}''$.

For convenience, the largest element in the set \(\mathbb{P} \) is denoted as $\max(\mathbb{P})$.

\begin{remark}
Each stage of the construction, \(\Delta_{31}\), \(\Delta_{32}\) and \(\Delta_{4}\),
builds upon elements generated in previous stages.
Specifically, \(\mathbb{A}_2\) utilizes the structure of \(\mathbb{A}_1\) to extend coverage over the range \([0, \max(\mathbb{A}_2)]\),
while \(\mathbb{A}_3\) reverses and shifts these elements to cover the negative axis.
\end{remark}

Based on the above analysis, the proposed scheme to construct a hole-free FODCA consists of four main steps:
\begin{enumerate}
    \item Partition the original set \(\mathbb{P}\) into three disjoint subsets: \(\mathbb{A}_1\), \(\mathbb{A}_2\) and \(\mathbb{A}_3\).

    \item Use the first subset \(\mathbb{A}_1\) as the generator to construct the second-order SCA \(\Sigma_2(\mathbb{A}_1)\) and the second-order DCA \(\Delta_2(\mathbb{A}_1)\).

    \item Generate the second subset \(\mathbb{A}_2\) by extending \(\Sigma_2(\mathbb{A}_1)\). Based on \(\mathbb{A}_1 \cup \mathbb{A}_2\), derive two distinct third-order DCAs: \(\Delta_{31}(\mathbb{A}_1 , \mathbb{A}_2)\) and \(\Delta_{32}(\mathbb{A}_1, \mathbb{A}_2)\).

    \item Obtain the third subset \(\mathbb{A}_3\) by further extending \(\Delta_{31}(\mathbb{A}_1 , \mathbb{A}_2)\) and \(\Delta_{32}(\mathbb{A}_1, \mathbb{A}_2)\).
\end{enumerate}

Therefore, the FODCA of $\mathbb{P}$ can be obtained.
Moreover, its consecutive segments can be expressed as \(\mathbb{U}(\mathbb{P}) = \{-U, \ldots, U\}\),
where \(U \in \mathbb{Z}_{\geq 0}\) is chosen to be maximal.
The design goal is to maximize the size of this segment, \(|\mathbb{U}(\mathbb{P})| = 2U + 1\),
subject to structural constraints on \(\mathbb{A}_1\), \(\mathbb{A}_2\) and \(\mathbb{A}_3\).

\subsection{FOHA Structure}
This section proposes a novel scheme for designing a fourth-order hierarchical array based on distinct forms of FODCA (FOHA).
The generator acts as the fundamental building block for
constructing a sparse array and can be selected from any desired array configuration.
\begin{definition}
(FOHA with generator $\mathbb{A}_1$): The physical sensor positions of the FOHA are set as

\begin{equation}\nonumber
\begin{aligned}
&\mathbb{P}= \mathbb{A}_1\cup \mathbb{A}_2\cup \mathbb{A}_3,\\
&\mathbb{A}_2=\{ \delta_1:\eta_1: \delta_1+\eta_1(N_2-1) \},\\
&\mathbb{A}_3=\{ \delta_2:\eta_2: \delta_2+\eta_2(N_3-1) \},
&\end{aligned}
\end{equation}
where
\begin{equation}\nonumber
\begin{aligned}
&\delta_1=\max(\mathbb{A}_1)+\lambda_1+1,\ \eta_1=\lambda_1+1, \\
&\delta_2\sx=\sx (\frac{1}{2}\sx+\sx N_2)\lambda_1\sx+\sx \lambda_3\sx+\sx \lambda_4\sx+\sx N_2\sx+\sx 1,\
\eta_2\sx=\sx \lambda_3\sx+\sx \lambda_4\sx+\sx 1,
\end{aligned}
\end{equation}
and the values of $\lambda_i(i=1,2,3,4)$ are
\begin{equation}\nonumber
\begin{aligned}
&\mathbb{U}_1=\{a+b\ |\ a,b \in \mathbb{A}_1\}=\{0: \lambda_1\},\\
&\mathbb{U}_2=\{a-b\ |\ a,b \in \mathbb{A}_2\}=\{ -\lambda_2:\lambda_2\}\ \& \ \lambda_1\geq\lambda_2, \\
&\mathbb{U}_3=\{c-(a+b)\ |\ a,b \in \mathbb{A}_1, c\in\mathbb{A}_2 \}=\{0: \lambda_3 \},\\
&\mathbb{U}_4=\{c-(a-b)\ |\ a,b \in \mathbb{A}_2, c\in\mathbb{A}_2 \}=\{0:\lambda_4 \}.
\end{aligned}
\end{equation}
\end{definition}

\begin{figure*}
 \center{\includegraphics[width=14cm]  {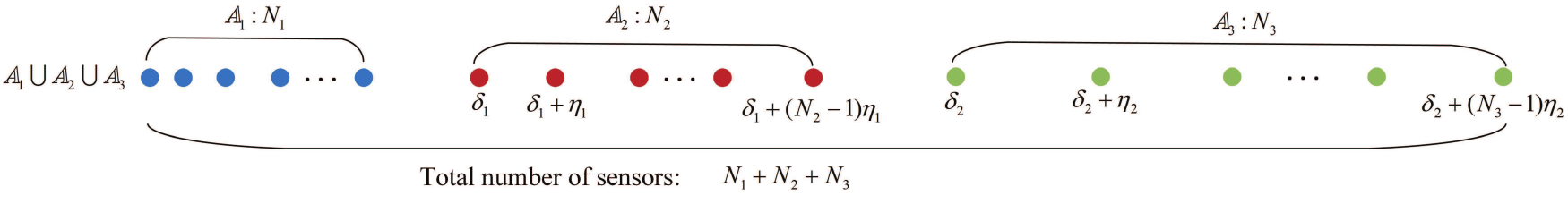}}
 \caption{\label{1} Structure of FOHA}
 \label{fig:foc}
\end{figure*}

\subsection{Consecutive segment of FOHA}
\begin{lemma}
\label{lea:dof}
FOHA with generator \(\mathbb{A}_1\) designed based on FODCA
is hole-free if the following conditions hold:\\
(I) $\lambda_1\geq\max(\mathbb{A}_1)$.\\
(II) $\lambda_2\geq\max(\mathbb{A}_1)$.\\
(III) $\lambda_2\geq \frac{\lambda_1}{2}$.
\end{lemma}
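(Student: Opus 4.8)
The plan is to establish hole-freeness by proving that the non-negative part of $\Delta_4(\mathbb{P})$ is a single gap-free block $\{0:U\}$, where $U$ is the largest lag the construction reaches; since $\Delta_4(\mathbb{P})$ is symmetric about the origin, this at once gives $\{-U:U\}\subseteq\Delta_4(\mathbb{P})$ and hence the claimed hole-freeness. Following the four-step design, I would write this non-negative support as a union of integer intervals produced stage by stage and then verify that successive intervals overlap or abut, so the union collapses to one block with no interior hole. In this way the proof reduces to a ``no gap'' check at each seam between consecutive stages, and I expect the three hypotheses to be exactly the inequalities that close these seams.

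For the generator and second subarray I would use the defining relations $\Sigma_2(\mathbb{A}_1)=\{0:\lambda_1\}$ and $\Delta_2(\mathbb{A}_2)=\{-\lambda_2:\lambda_2\}$ (consecutive by the definitions of $\lambda_1,\lambda_2$), which together with the standing assumption $\lambda_1\ge\lambda_2$ already cover $\{-\lambda_2:\lambda_1\}$. Since $\mathbb{A}_2$ is an arithmetic progression with common difference $\eta_1=\lambda_1+1$, each $c\in\mathbb{A}_2$ contributes through the sum-type form the interval $c-\Sigma_2(\mathbb{A}_1)=\{c-\lambda_1:c\}$, and consecutive such intervals are one unit apart, so they tile into $\{\max(\mathbb{A}_1)+1:\max(\mathbb{A}_2)\}$. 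Gluing this onto $\{0:\lambda_1\}$ needs $\max(\mathbb{A}_1)+1\le\lambda_1+1$, i.e. $\max(\mathbb{A}_1)\le\lambda_1$, which is precisely condition (I); adjoining $\max(\mathbb{A}_2)+\Delta_2(\mathbb{A}_2)$ then pushes the consecutive non-negative support up to $\max(\mathbb{A}_2)+\lambda_2$.

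The final stage introduces $\mathbb{A}_3$. Here I would show that each $d\in\mathbb{A}_3$ produces, via the fourth-order combinations $d-\Delta_{31}$ and $d+\Delta_{32}$, a block of the form $\{d-\lambda_3:d+\lambda_4\}$ whose length matches the common difference $\eta_2=\lambda_3+\lambda_4+1$; hence the blocks for successive elements of $\mathbb{A}_3$ abut and tile into one interval running up to the maximal lag $U$. The delicate point is the seam where this $\mathbb{A}_3$-block meets the lower coverage: its left endpoint is governed by $\delta_2$, whose offset carries the half-aperture term $\frac{\lambda_1}{2}$, and matching it against the top $\max(\mathbb{A}_2)+\lambda_2$ of the previous stage yields an inequality relating $\lambda_2$ to $\frac{\lambda_1}{2}$ and to $\max(\mathbb{A}_1)$. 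I expect conditions (III) $\lambda_2\ge\lambda_1/2$ and (II) $\lambda_2\ge\max(\mathbb{A}_1)$ to be exactly what makes this gap nonpositive.

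The main obstacle is this $\mathbb{A}_2$-to-$\mathbb{A}_3$ transition, the only seam where the support is continued not by a single co-array but by the interplay of the sum-type reach ($\lambda_1,\lambda_3$) and the difference-type reach ($\lambda_2,\lambda_4$): one must bound the worst-case gap in terms of $\lambda_1,\dots,\lambda_4$, $N_2$ and the exact offsets $\delta_2,\eta_2$, and show it is nonpositive precisely under (II) and (III). Once every seam is closed, the non-negative part of $\Delta_4(\mathbb{P})$ is the single interval $\{0:U\}$, and the origin-symmetry of $\Delta_4(\mathbb{P})$ upgrades this to $\{-U:U\}\subseteq\Delta_4(\mathbb{P})$, so the FOHA is hole-free.
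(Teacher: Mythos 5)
Your proposal follows essentially the same route as the paper's proof: reduce to the non-negative lags by symmetry, tile the support stage by stage with the intervals generated by $\Sigma_2(\mathbb{A}_1)$, $\Delta_2(\mathbb{A}_1)$, the two third-order co-arrays built from $\mathbb{A}_2$, and the $\eta_2$-spaced blocks built from $\mathbb{A}_3$ (combining the sum-type and difference-type forms), closing each seam with exactly the inequalities (I)--(III). The only slip is notational: the contiguous difference co-array $\{-\lambda_2:\lambda_2\}$ you invoke is $\Delta_2(\mathbb{A}_1)$ (as used in the paper's own proof), not $\Delta_2(\mathbb{A}_2)$, which for the sparse uniform subarray $\mathbb{A}_2$ would not be contiguous.
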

\begin{proof}
Firstly, since the consecutive segment of the FODCA for FOHA are symmetric about zero, only the positive lags are considered.
The second-order DCA and SCA of the first subarray \(\mathbb{A}_1\) are given by
\begin{equation}\nonumber
\begin{aligned}
&\Sigma_2(\mathbb{A}_1)=\mathbb{C}(\mathbb{A}_1,\mathbb{A}_1)\supseteq\mathbb{U}_1=\{ 0 : \lambda_1\} , \\
&\Delta_2(\mathbb{A}_1)=\mathbb{C}(\mathbb{A}_1,-\mathbb{A}_1)\supseteq\mathbb{U}_2=\{ -\lambda_2 : \lambda_2 \}.
\end{aligned}
\end{equation}

Based on the obtained second-order DCA and SCA, two distinct third-order DCAs of \(\mathbb{A}_1 \cup \mathbb{A}_2\) can be derived as follows
\begin{equation}\nonumber
\begin{aligned}
&\Delta_{31}(\mathbb{A}_2,-\Sigma_2(\mathbb{A}_1))=\mathbb{C}(\mathbb{A}_2,-\mathbb{C}(\mathbb{A}_1,\mathbb{A}_1))\supseteq\\
&\mathbb{U}_3\sx=\sx \{\max(\mathbb{A}_1) + 1 : N_2\lambda_1 + \max(\mathbb{A}_1) + N_2\}, \\
&\Delta_{31}(\max(\mathbb{A}_1),-\Sigma_2(\mathbb{A}_1))\sx=\sx \mathbb{C}(\max(\mathbb{A}_1),-\mathbb{C}(\mathbb{A}_1,\mathbb{A}_1))\supseteq\\
&\mathbb{U}_4=\{ \max(\mathbb{A}_1)-\lambda_1 : \max(\mathbb{A}_1) \}, \\
&\Delta_{32}(\mathbb{A}_2,-\Delta_2(\mathbb{A}_1))=\mathbb{C}(\mathbb{A}_2,-\mathbb{C}(\mathbb{A}_1,-\mathbb{A}_1))\supseteq\\
&\mathbb{U}_5\sx=\sx\{ \max(\mathbb{A}_1)\sx+\sx\lambda_1\sx+\sx 1\sx-\sx\lambda_2 \sx : \sx N_2\lambda_1\sx+\sx \max(\mathbb{A}_1)\sx+\sx N_2\sx+\sx \lambda_2 \}, \\
&\Delta_{32}(\max(\mathbb{A}_1),-\Delta_2(\mathbb{A}_1))=\mathbb{C}(\lambda_2,-\mathbb{C}(\mathbb{A}_1,-\mathbb{A}_1))\supseteq\\
&\mathbb{U}_6=\{ \max(\mathbb{A}_1)-\lambda_2 : \max(\mathbb{A}_1)+\lambda_2  \}.
\end{aligned}
\end{equation}

When condition (I) holds, it follows that \(\max(\mathbb{A}_1) - \lambda_1 \leq 0\).
Therefore, the following equation can be established
\begin{equation}\nonumber
\begin{aligned}
&\mathbb{U}_3 \cup \mathbb{U}_4 \supseteq \mathbb{U}_{7}=\{0 : N_2\lambda_1+ \max(\mathbb{A}_1 )+ N_2 \}.
\end{aligned}
\end{equation}

Moreover, when conditions (II) and (III) hold, we have \(\max(\mathbb{A}_1) - \lambda_2 \leq 0\)
and $\max(\mathbb{A}_1) + \lambda_2 + 1 \geq \max(\mathbb{A}_1) + 1$.
Therefore, the following equations hold
\begin{equation}\nonumber
\begin{aligned}
&\mathbb{U}_5 \cup \mathbb{U}_6 \supseteq \mathbb{U}_8=\{0 : N_2\lambda_1+ \max(\mathbb{A}_1)+ N_2+ \lambda_2\}.
\end{aligned}
\end{equation}

Since the FODCA contains two positive and two negative components, the expression for the FODCA can be derived as follows
\begin{equation}\nonumber
\begin{aligned}
&\Delta_4(\max(\mathbb{U}_7),\mathbb{U}_7)\supseteq\mathbb{U}_9\\
&=\{N_2\lambda_1+ \max(\mathbb{A}_1)+N_2 : 2N_2\lambda_1+ \max(\mathbb{A}_1)+N_2 \},\\
&\Delta_4(\mathbb{P}_3,\mathbb{U}_7)\supseteq\mathbb{U}_{10}=\{\delta_2 : \delta_2+N_2\lambda_1+ \max(\mathbb{A}_1)+N_2 \}\\
&\cup \{\delta_2+\eta_2 : \delta_2+\eta_2+N_2\lambda_1+ \max(\mathbb{A}_1)+N_2 \}\\
&\cup \cdots \cdots \cdots\\
&\cup\{\delta_2\sx+\sx \eta_2(N_3\sx-\sx 1) : \delta_2\sx+\sx \eta_2(N_3\sx-\sx 1)\sx+\sx N_2\lambda_1\sx+\sx \max(\mathbb{A}_1)\sx+\sx N_2 \},\\
&\Delta_4(\mathbb{P}_3,\mathbb{U}_8)\supseteq\mathbb{U}_{11}\\
&=\{\delta_2-N_2\lambda_1- \max(\mathbb{A}_1)-N_2-\lambda_2 : \delta_2 \}\\
&\cup \{\delta_2\sx+\sx \eta_2\sx-\sx N_2\lambda_1\sx-\sx \max(\mathbb{A}_1)\sx-\sx N_2\sx-\sx \lambda_2 : \delta_2\sx+\sx \eta_2 \}\\
&\cup \cdots \cdots \cdots\\
&\cup \{\delta_2\sx+\sx\eta_2(N_3\sx-\sx 1)\sx-\sx N_2\lambda_1\sx-\sx \max(\mathbb{A}_1)\sx-\sx N_2\sx-\sx\lambda_2 : \delta_2\sx+\sx\eta_2(N_3\sx-\sx1) \}.
\end{aligned}
\end{equation}

When condition (II) holds, we have $2\max(\mathbb{A}_1) + 2N_1 + \lambda_1 + 2N_2 + 1 \geq \delta_2 - \max(\mathbb{A}_1) - N_2 \lambda_2 - N_2 - \lambda_2$.
Therefore, \(\mathbb{U}_9 \cup \mathbb{U}_{11}\) is hole-free.
Furthermore, the quantity \(\delta_2 + \max(\mathbb{A}_1) + N_2 \lambda_1 + N_2 + 1\) in \(\mathbb{U}_8\) equals
$\delta_2 + \eta_2 - \max(\mathbb{A}_1) - N_2 \lambda_1 - N_2 - \lambda_2$ in \(\mathbb{U}_{10}\) when \(N_3 = 1\).
Similarly, this relationship holds for all \(N_3\).
Therefore, the FODCA of \(\mathbb{P}\) can be expressed as
\begin{equation}\nonumber
\begin{aligned}
\mathbb{U}_{9}\cup\sx \mathbb{U}_{10}\cup\sx \mathbb{U}_{11}\sx=\sx \{0 : \delta_2\sx+\sx \eta_2(N_3\sx-\sx1)\sx+\sx N_2\lambda_1\sx+\sx \max(\mathbb{A}_1)\sx+\sx N_2 \}.
\end{aligned}
\end{equation}

Furthermore, the maximum number of consecutive segment in the FODCA for FOHA is
$2 \left[ \delta_2 + \eta_2 (N_3 - 1) + N_2 \lambda_1 + \max(\mathbb{A}_1) + N_2 \right] + 1$.
\end{proof}

\section{Two Specific FOHAs With Distinct Generators}
\subsection{Generator With Nested Array}
To enhance the DOFs, the generator is chosen as a nested array (NA).
The FOHA(NA) is systematically designed by appropriately positioning the sensors of three subarrays,
as illustrated in Fig.~\ref{fig:stru},
where subarray 1 is a NA with \(N_1\) sensors, and subarrays 2 and 3 are SLAs with large inter-sensor spacings.
\subsubsection{Structure of the Proposed FOHA(NA)}
\begin{definition}
\label{def:na}
The FOHA(NA) consists of three subarrays with $N \sx=\sx N_1 + N_2 + N_3$ sensors,
where \(N_i\) is the number of sensors in the subarray $i$ \((i=1,2,3)\).
The sensor positions, denoted by \(\mathbb{A}_1\), \(\mathbb{A}_2\) and \(\mathbb{A}_3\), are given as follows
\begin{equation}\nonumber
\label{cn1}
\begin{aligned}
&\mathbb{P}=\mathbb{A}_1\cup\mathbb{A}_2\cup\mathbb{A}_3,\\
&\mathbb{A}_1=\{ 0:M_1-1 \} \sx \cup\sx \{ 2M_1-1:M_1:M_1(M_2+1)-1 \},\\
&\mathbb{A}_2=\{\delta_1: \eta_1: \delta_1+\eta_1(N_2-1)\},\\
&\mathbb{A}_3=\{\delta_2: \eta_2: \delta_2+\eta_2(N_3-1)\},\\
&\delta_1=2M_1M_2+3M_1-2, \eta_1=M_1M_2+2M_1-1,\\
&\delta_2=(3N_2+4)M_1M_2+(6N_2+4)M_1-(3N_2+3),\\
&\eta_2=(2N_2+3)M_1M_2+(4N_2+3)M_1-(2N_2+2).
\end{aligned}
\end{equation}
where \(M_1\) and \(M_2\) denote the numbers of physical sensors in subarray 1 and 2 of \(\mathbb{A}_1\), respectively.
The structure of FOHA(NA) is illustrated in Fig.~\ref{fig:stru}(a).
\end{definition}

\begin{figure*}
  \centering
  \subfigure[]{
    \includegraphics[scale=0.185]{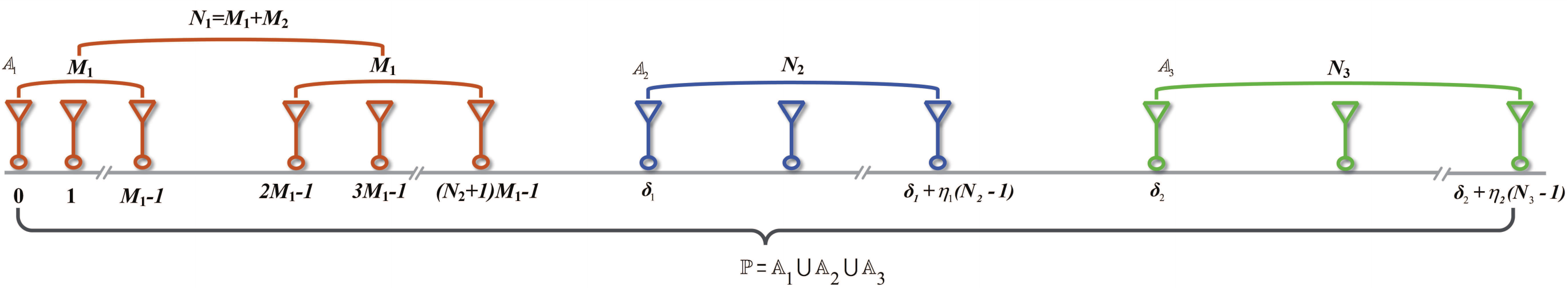}}
  \hspace{0in} 
  \subfigure[]{
    \includegraphics[scale=0.18]{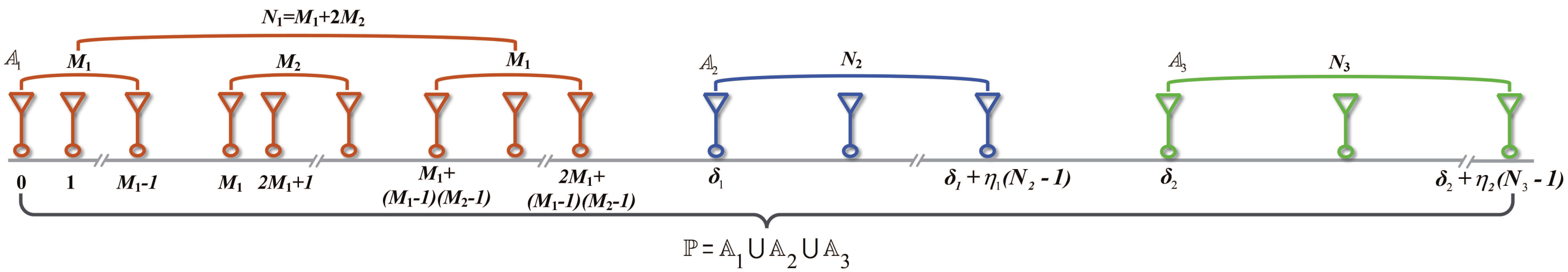}}
 \caption{\label{1} The structure of FOHA with different generators when $N=9$. (a) FOHA(NA). (b) FOHA(CNA). }
\label{fig:stru}
\end{figure*}

\subsubsection{Consecutive Segment of the FOHA(NA)}

\begin{theorem}
FOHA(NA) designed based on the FODCA  is hole-free.
\end{theorem}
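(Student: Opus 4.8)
The plan is to derive the statement directly from Lemma~\ref{lea:dof}: it suffices to show that the nested-array generator $\mathbb{A}_1$ of Definition~\ref{def:na} satisfies conditions (I)--(III), and that the closed-form parameters $\delta_1,\eta_1,\delta_2,\eta_2$ of Definition~\ref{def:na} coincide with the generic expressions of the preceding FOHA definition (Definition~2) once the co-array quantities $\max(\mathbb{A}_1)$, $\lambda_1$, $\lambda_2$, $\lambda_3$, $\lambda_4$ of this particular $\mathbb{A}_1$ are substituted. If so, FOHA(NA) is literally an instance of the generic FOHA and Lemma~\ref{lea:dof} applies verbatim, yielding hole-freeness.

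First I would read off the geometry of the generator. The inner ULA $\{0:M_1-1\}$ has aperture $M_1-1$ and the outer subarray has spacing $M_1$, so $\mathbb{A}_1$ is a genuine two-level nested array with $\max(\mathbb{A}_1)=M_1(M_2+1)-1=M_1M_2+M_1-1$. Its difference co-array is the standard filled ULA, so $\lambda_2=M_1M_2+M_1-1=\max(\mathbb{A}_1)$. For the sum co-array I would tile the blocks explicitly: the inner$+$inner sums fill $[0,\,2M_1-2]$, and each inner$+$outer block $\{(k+1)M_1-1 : (k+2)M_1-2\}$ for $k=1,\dots,M_2$ is $M_1$ wide and abuts the previous one, so the consecutive segment from $0$ runs up to $M_1(M_2+2)-2$, giving $\lambda_1=M_1M_2+2M_1-2$; the first hole occurs immediately after, since beyond this point only the sparse outer$+$outer sums survive. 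With these three quantities the conditions are immediate: (I) $\lambda_1-\max(\mathbb{A}_1)=M_1-1\ge 0$; (II) $\lambda_2=\max(\mathbb{A}_1)$, so the inequality holds with equality; and (III) $2\lambda_2-\lambda_1=M_1M_2\ge 0$, i.e. $\lambda_2\ge\lambda_1/2$. The side constraint $\lambda_1\ge\lambda_2$ used in the definition of $\mathbb{U}_2$ also holds, as $\lambda_1-\lambda_2=M_1-1\ge 0$. Hence all hypotheses of Lemma~\ref{lea:dof} are met and the FODCA of FOHA(NA) is hole-free, with the consecutive-segment length obtained by specialising the formula of Lemma~\ref{lea:dof}.

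The main obstacle is the remaining consistency check that Definition~\ref{def:na} really is the generic construction instantiated at this generator. Matching $\eta_1=\lambda_1+1$ and $\delta_1=\max(\mathbb{A}_1)+\lambda_1+1$ is a one-line substitution, but matching $\delta_2$ and $\eta_2$ requires first computing the two third-order spans $\lambda_3$ and $\lambda_4$ for this generator and then expanding $\delta_2=(\tfrac12+N_2)\lambda_1+\lambda_3+\lambda_4+N_2+1$ and $\eta_2=\lambda_3+\lambda_4+1$ to recover the closed forms $(3N_2+4)M_1M_2+\cdots$ and $(2N_2+3)M_1M_2+\cdots$. This is where the genuine labor lies: the sum-co-array range $\lambda_1$ is less standard than the difference-co-array range (nested arrays are optimised for the latter), and care is needed with the factor $\tfrac12\lambda_1$ and the parity of $M_1(M_2+2)$ so that the integer closed forms are reproduced exactly. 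An equivalent, more self-contained route is to bypass the abstract matching altogether and instead instantiate the block-union argument in the proof of Lemma~\ref{lea:dof} directly with the concrete NA segments $\mathbb{U}_1,\dots,\mathbb{U}_{11}$, verifying that consecutive blocks abut; I expect the stitching of $\mathbb{U}_9\cup\mathbb{U}_{10}\cup\mathbb{U}_{11}$ at the junctions governed by $\delta_2$ and $\eta_2$ to be the delicate point.
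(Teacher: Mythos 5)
Your proposal is correct and follows essentially the same route as the paper: the paper's proof likewise reduces the theorem to Lemma~\ref{lea:dof} and verifies conditions (I)--(III) using exactly the values $\lambda_1=M_1M_2+2M_1-2$ and $\lambda_2=M_1M_2+M_1-1=\max(\mathbb{A}_1)$ that you derive. You are in fact somewhat more careful than the paper, which simply asserts these co-array spans and never performs the consistency check between the closed forms of $\delta_2,\eta_2$ in Definition~\ref{def:na} and the generic expressions $(\tfrac12+N_2)\lambda_1+\lambda_3+\lambda_4+N_2+1$ and $\lambda_3+\lambda_4+1$ that you rightly flag as the remaining labor.
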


\begin{proof}
According to Lemma \ref{lea:dof}, FOHA(NA) designed based on FODCA is hole-free provided that conditions (I), (II) and (III) are satisfied.
Regarding condition (I), we have
\begin{equation}\nonumber
\begin{aligned}
\lambda_2 = M_1 M_2 + M_1 - 1= \max(\mathbb{A}_1),
\end{aligned}
\end{equation}
which satisfies the requirement.

Similarly, for condition (II),
\begin{equation}\nonumber
\begin{aligned}
&\lambda_1=M_1M_2+2M_1-2 > \max(\mathbb{A}_1)=M_1M_2+M_1-1,
\end{aligned}
\end{equation}
ensuring that condition (II) is met.

For condition (III):
\begin{equation}\nonumber
\begin{aligned}
&\lambda_2=M_1M_2+M_1-1 > \frac{\lambda_1}{2}=\frac{1}{2}M_1M_2+M_1-1,
\end{aligned}
\end{equation}
which also holds.

Therefore, by Lemma \ref{lea:dof}, FOHA(NA) designed based on FODCA is hole-free when conditions (I), (II) and (III) are satisfied.
\end{proof}

\subsubsection{The Maximum DOFs of FOHA(NA) With the Given Number of Physical Sensors}
~\par
The DOFs of FOHA(NA) can be further improved by optimizing the allocation of physical sensors among subarrays 1, 2 and 3
for a given number of physical sensors.
\begin{theorem}
To maximize the DOFs of FOHA (NA) with a fixed total number of sensors,
the number of sensors for each subarray is set as
\begin{equation}\nonumber
\begin{cases}
N_1=Caculated\ by\ Algorithm\ 1,\\
N_2=\lceil\frac{-\frac{1}{4}N_1^3+(\frac{N}{4}-\frac{17}{8})N_1^2+(2N+\frac{5}{2})N_1+2N}{ \frac{1}{2}N_1^2 + 4N_1 - 4}\rfloor,\\
N_3=\lceil\frac{-\frac{1}{4}N_1^3+(\frac{N}{4}-\frac{15}{8})N_1^2+(2N+\frac{3}{2})N_1-2N}{ \frac{1}{2}N_1^2 + 4N_1 - 4}\rfloor.
\end{cases}
\end{equation}

For subarray 1, to maximize the DOFs, the parameters \(M_1\) and \(M_2\) are chosen as
\begin{equation}\nonumber
M_1=\lceil N_1/2 \rceil, \ M_2=\lfloor N_1/2 \rfloor.
\end{equation}
\end{theorem}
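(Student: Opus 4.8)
The plan is to regard the number of consecutive lags produced by Lemma~\ref{lea:dof}, namely $2U+1$ with $U = \delta_2 + \eta_2(N_3-1) + N_2\lambda_1 + \max(\mathbb{A}_1) + N_2$, as an explicit objective in the partition parameters and to maximize it. A crucial simplification is that, by the hole-free property just established for FOHA(NA), this count is valid for \emph{every} admissible nested-array split; hence the maximization is unconstrained except for the sensor budget $M_1 + M_2 + N_2 + N_3 = N$ and integrality. First I would substitute the closed forms of Definition~\ref{def:na} together with $\max(\mathbb{A}_1) = M_1M_2 + M_1 - 1$ and $\lambda_1 = M_1M_2 + 2M_1 - 2$ (both read off from the proof of the hole-free theorem) into $U$, obtaining a polynomial in $M_1,M_2,N_2,N_3$, and then optimize in two decoupled stages: first the inner generator split $(M_1,M_2)$ for fixed $N_1 = M_1 + M_2$, then the outer split $(N_2,N_3)$ for fixed $N_1$, and finally $N_1$ itself.

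For the generator stage I would fix $N_1,N_2,N_3$ and view $U$ as a function of $M_1$ with $M_2 = N_1 - M_1$. Every monomial of $U$ carrying the product $M_1M_2$ has a strictly positive coefficient (collected from $\delta_2$, $\eta_2(N_3-1)$, $N_2\lambda_1$ and $\max(\mathbb{A}_1)$), so using $M_1M_2 = N_1M_1 - M_1^2$ turns $U$ into a downward parabola in $M_1$. Maximizing the dominant aperture term $M_1M_2$ forces the balanced split, and the residual positive linear-in-$M_1$ contribution breaks the even/odd tie toward the larger index, yielding $M_1 = \lceil N_1/2\rceil$ and $M_2 = \lfloor N_1/2\rfloor$, the classical optimal nested-array allocation with $M_1M_2 = \lfloor N_1^2/4\rfloor$.

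Next I would substitute this optimal split, so that $\max(\mathbb{A}_1)$, $\lambda_1$, $\delta_2$ and $\eta_2$ become explicit in $N_1$, and eliminate $N_3 = N - N_1 - N_2$. Since $\eta_2$ is affine in $N_2$ and is multiplied by the affine factor $N_3 - 1$, the term $\eta_2(N_3-1)$ contributes a negative $N_2^2$ coefficient, making $U$ a concave quadratic in $N_2$; solving the stationarity condition $\partial U/\partial N_2 = 0$ gives a real maximizer of the form (cubic in $N_1$)/(quadratic in $N_1$), and rounding to the nearest integer reproduces exactly the stated expression for $N_2$, with $N_3 = N - N_1 - N_2$ giving the companion formula. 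After both inner optimizations, $U$ becomes a single rational function of $N_1$ over the feasible integer range of $N_1$; this one-dimensional profile has no clean closed-form maximizer, which is precisely why $N_1$ is obtained by the exhaustive search of Algorithm~1.

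The hard part will be justifying the rounding and the decoupling simultaneously. Because $U$ is concave in $M_1$ and in $N_2$ separately, the integer optimum of each univariate slice is one of the two integers bracketing its real vertex, so rounding is locally safe; however, the vertex of the $M_1$-parabola depends on $N_2,N_3$ — its offset from $N_1/2$ equals the ratio of the positive linear coefficient to twice the positive curvature coefficient — so I must show that over the relevant regime this offset stays within the unit interval that rounds to $\lceil N_1/2\rceil$, equivalently that the linear correction is dominated by the curvature of the product term, and that the two rounding steps remain jointly optimal and budget-feasible. A secondary, purely mechanical difficulty is the bulky polynomial bookkeeping in substituting $\delta_2$ and $\eta_2$ and in matching the rounded vertex to the precise numerators $-\tfrac14 N_1^3 + (\tfrac N4 - \tfrac{17}{8})N_1^2 + \cdots$ stated in the theorem.
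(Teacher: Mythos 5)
Your proposal follows essentially the same route as the paper: fix $N_1$, eliminate one of $N_2,N_3$ via the budget constraint to obtain a concave quadratic in the remaining variable, solve the stationarity condition and round to the nearest integer, adopt the balanced nested split $M_1=\lceil N_1/2\rceil$, $M_2=\lfloor N_1/2\rfloor$, and resolve $N_1$ by the exhaustive search of Algorithm~1 because the resulting profile in $N_1$ has no tractable closed-form maximizer. The only difference is cosmetic (you optimize in $N_2$ where the paper optimizes in $N_3$), and you are in fact somewhat more careful than the paper about justifying the generator split and the safety of the rounding steps, which the paper simply asserts.
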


\begin{proof}
To obtain the maximum DOFs of FOHA(NA),
it is equivalent to solve the following optimization problem
\begin{equation}\nonumber
\begin{aligned}
& \underset{N_1,N_2,N_3\in \mathbb{N}_+}{maximize}\ \ \delta_2 + \eta_2(N_3 - 1) + \lambda_3,\\
& subject\ to\ \ N_1+N_2+N_3=N. \ \ \ \ \ \ \ \ \ \ \ \ \ \ (P1)
\end{aligned}
\end{equation}

For problem (P1), the objective function is cubic.
Since the second derivative (i.e., curvature) of a cubic function changes sign,
the curve can bend both upwards and downwards.
This property renders the optimization problem non-convex.
In a non-convex optimization problem, multiple local extrema may exist within the domain,
making it difficult to directly obtain the global optimum \cite{Krentel1986}.
To address the problem,
by assuming \(N_1\) for subarray 1 is fixed and chosen to maximize the consecutive lags
of the second-order SCA and DCA generated by subarray 1,
problem (P1) is reformulated as an optimization problem with only two variables, \(N_2\) and \(N_3\), expressed as follows
\begin{equation}\nonumber
\begin{aligned}
&\underset{N_1,N_2,N_3\in \mathbb{N}_+}{maximize}\ \ \delta_2 + \eta_2(N_3 - 1) + \lambda_3,\\
&subject\ to\ \ \ N_2+N_3=N-N_1. \ \ \ \ \ \ \ \ \ \ \ \ \ \ \ \ \ \ \ \ (P2)
\end{aligned}
\end{equation}

At this point, problem (P2) becomes a convex optimization problem,
which guarantees that the global optimum can be directly obtained within the domain of the independent variables \cite{Krentel1986}.
To solve problem (P2), substitute $N_2 = N - N_1 - N_3$ into the DOFs expression, yielding
\begin{equation}
\label{cn2}
\begin{aligned}
f_1(N_3)&\sx \triangleq\sx (-1/(4N_1^2) - 2N_1 + 2)N_3^2 + (-15/(8N_1^2) + 3/(2N_1) \\
&- 1/(4N_1^3) - 2N + 1/(4N_1^2N) + 2N_1N)N_3 - 2-2N \\
&- N_1^3/4 + 3N_1 - 7N_1^2/4 + N_1^2N/4 + 2N_1N.
\end{aligned}
\end{equation}

If \(N_1\) is fixed and the total number of physical sensors \(N\) is given,
the unknown parameter in (\ref{cn2}) is only \(N_3\),
and the problem reduces to maximizing a quadratic function \(f_1(N_3)\) with respect to \(N_3\).
Since this is a convex optimization problem, the maximum of \(f_1(N_3)\) occurs where its first derivative equals zero \cite{Krentel1986}.
The first derivative of \(f_1(N_3)\) is given by
\begin{equation}\nonumber
\begin{aligned}
\frac{\partial f_1(N_3)}{\partial N_3}\sx& =\sx 2(-1/(4N_1^2) - 2N_1 + 2)N_3+ (-15/(8N_1^2) \\
&+ 3/(2N_1) \sx-\sx 1/(4N_1^3) - 2N + 1/(4N_1^2N) + 2N_1N).
\end{aligned}
\end{equation}

When $\frac{\partial f_1(N_3)}{\partial N_3}=0$, the value of $N_3$ can be solved as follows
$N_3=\frac{-\frac{1}{4}N_1^3+(\frac{N}{4}-\frac{15}{8})N_1^2+(2N+\frac{3}{2})N_1-2N}{ \frac{1}{2}N_1^2 + 4N_1 - 4}$.

Since $N_3 \in \mathbb{N}_+$, $N_3=\lceil\frac{-\frac{1}{4}N_1^3+(\frac{N}{4}-\frac{15}{8})N_1^2+(2N+\frac{3}{2})N_1-2N}{ \frac{1}{2}N_1^2 + 4N_1 - 4}\rfloor$.
Further, we can get
$N_2=\lceil\frac{-\frac{1}{4}N_1^3+(\frac{N}{4}-\frac{17}{8})N_1^2+(2N+\frac{5}{2})N_1+2N}{ \frac{1}{2}N_1^2 + 4N_1 - 4}\rfloor$.

Next, we discuss the value of \(N_1\). Based on the previous analysis,
the corresponding values of \(N_2\), \(N_3\), \(M_1\) and \(M_2\) for each \(N_1\) can be determined.
Substituting these variables into (\ref{cn2}) yields
\begin{equation}\nonumber
\label{cn3}
\begin{aligned}
&h_1(N_1)\sx \triangleq\sx -\sx N_1^8/(64\varrho^2) \sx+\sx (N/(32\varrho^2)\sx-\sx 19/(64\varrho^2) )N_1^7\sx+\sx (1/(16\varrho)\sx\\
&-\sx 365/(256\varrho^2)\sx-\sx N^2/(64\varrho^2) \sx+\sx 39N/(64\varrho^2))N_1^6\sx+\sx (27/(16\varrho^2)\\
&\sx+\sx 11/(16\varrho) \sx-\sx 5N^2/(16\varrho^2)\sx-\sx N/(8\varrho)\sx+\sx 95N/(32\varrho^2))N_1^5 \\
&\sx+\sx (3597/(256\varrho^2)\sx+\sx 33/(64\varrho)\sx+\sx N^2/(16\varrho) \sx-\sx 99N^2/(64\varrho^2)\\
&\sx-\sx 251N/(64\varrho^2) \sx-\sx 23N/(16\varrho))N_1^4 \sx+\sx ( 9N^2/(4\varrho^2)\sx-\sx1/4 \\
&\sx-\sx 223/(32\varrho) \sx-\sx 657/(32\varrho^2)\sx+\sx 3N^2/(4\varrho)\sx-\sx 491N/(16\varrho^2) \\
&\sx-\sx 19N/(16\varrho))N_1^3 \sx+\sx (273/(64\varrho) \sx+\sx 117/(16\varrho^2)\sx+\sx 399N/(8\varrho^2) \\
&\sx+\sx 11N^2/(16\varrho) \sx+\sx 67N^2/(4\varrho^2)\sx+\sx 123N/(8\varrho)\sx-\sx 3/4 \sx+\sx N/4)N_1^2\\
& \sx+\sx (3/(16\varrho) \sx-\sx 17N^2/(2\varrho)\sx-\sx 30N^2/\varrho^2 \sx-\sx 39N/(2\varrho^2) \sx-\sx 85N/(8\varrho)\\
&\sx+\sx 15/4 \sx+\sx N)N_1 \sx-\sx N/(4\varrho) \sx+\sx 13N^2/\varrho^2 \sx+\sx 13N^2/(2\varrho) \sx-\sx 13N/4 \sx-\sx 11/4,
\end{aligned}
\end{equation}
where $\varrho\sx=\sx 1/2N_1^2 + 4N_1 - 4$. At this stage, our goal is to maximize the value of \(h_1(N_1)\).
However, since \(h_1(N_1)\) is a cubic function of the variable \(N_1\),
finding its maximum is challenging due to the presence of multiple extrema \cite{Krentel1986}.
Therefore, alternative methods are sought to determine the maximum of \(h_1(N_1)\).
Further analysis shows that \(N_1\) ranges from 2 to \(N\).
Therefore, Algorithm 1 is proposed to search for the maximum value of $h_1(N_1)$ with respect to $N_1$,
and the corresponding maximum aperture of FOHA(NA) is obtained.

Based on the above analysis, the number of physical sensors $N_1$, $N_2$ and $N_3$ in subarray 1, 2 and 3,
corresponding to the maximum DOFs of FOHA(NA), are determined.

\begin{table}[h]
\begin{center}
\label{tab1}
\renewcommand{\arraystretch}{0.8}
\begin{tabular}{ l }
\hline
\textbf{Algorithm 1: Search for the optimal array parameters } \\
\ \ \ \ \ \ \ \ \ \ \ \ \ \ \ \ \ \textbf{of FOHA(NA)}  \\
\hline
\bf{Input}: $N$. \\
Initialization: $\text{DOFs}^*=$ 0, $N_1^*=$ 0, $N_2^*=$ 0, $N_3^*=$ 0. \\
\ \ \ \ for $N_1=1$ to $N$.   \\
\ \ \ \ \ \ \ \ $N_2=\lceil\frac{-\frac{1}{4}N_1^3+(\frac{N}{4}-\frac{17}{8})N_1^2+(2N+\frac{5}{2})N_1+2N}{ \frac{1}{2}N_1^2 + 4N_1 - 4}\rfloor$.\\
\ \ \ \ \ \ \ \ $N_3=\lceil\frac{-\frac{1}{4}N_1^3+(\frac{N}{4}-\frac{15}{8})N_1^2+(2N+\frac{3}{2})N_1-2N}{ \frac{1}{2}N_1^2 + 4N_1 - 4}\rfloor$. \\
\ \ \ \ \ \ \ \ $M_1=\lceil\frac{N_1}{2}\rceil$, $M_2=\lfloor\frac{N_1}{2}\rfloor$. \\
\ \ \ \ \ \ \ \ $\delta_1=2M_1M_2+3M_1-2$, $\eta_1=M_1M_2+2M_1-1$.\\
\ \ \ \ \ \ \ \ $\lambda_1=M_1M_2+2M_1-2$, $\lambda_2=M_1M_2+M_1-1$.\\
\ \ \ \ \ \ \ \ $\delta_2\sx=\sx(3N_2\sx+\sx 4)M_1M_2\sx+\sx(6N_2\sx+\sx 4)M_1- 3N_2\sx -\sx 3$.\\
\ \ \ \ \ \ \ \ $\eta_2\sx=\sx(2N_2\sx+\sx 3)M_1M_2\sx+\sx(4N_2\sx+\sx 3)M_1- 2N_2\sx -\sx 2$.\\
\ \ \ \ \ \ \ \ $\lambda_3= \delta_1+\eta_1(N_2-1)$, $\lambda_4= \delta_1+\eta_1(N_2-1)+\lambda_2$.\\
\ \ \ \ \ \ \ \ $E=\delta_2 + \eta_2(N_3 - 1) + \lambda_3$. \\
\ \ \ \ \ \ \ \ $\text{DOFs}=2E+1$. \\
\ \ \ \ \ \ \ \ if \ DOFs $>$ $\text{DOFs}^*$\\
\ \ \ \ \ \ \ \ \ \ \ \ $\text{DOFs}^* =$ DOFs.\\
\ \ \ \ \ \ \ \ \ \ \ \ $N_1^*=N_1$, $N_2^*=N_2$, $N_3^*=N_3$.\\
\ \ \ \ \ \ \ \ end\\
\ \ \ \ end \\
\textbf{Output}: $N_1^*$, $N_2^*$, $N_3^*$, $\text{DOFs}^*$ $\rhd$ optimal array parameters.\\
\hline
\end{tabular}
\end{center}
\end{table}
\end{proof}

\subsection{Generator With Concatenated Nested Array}
The FOHA(CNA) is systematically designed by appropriately placing the sensors of three subarrays, as illustrated in Fig. \ref{fig:stru},
where subarray 1 is a concatenated nested array (CNA) \cite{Robin2017} with $N_1$ sensors,
and subarrays 2 and 3 are SLAs with large inter-sensor spacings.
\subsubsection{Structure of the Proposed FOHA(CNA)}
\begin{definition}
\label{def:cna}
The FOHA(CNA) consists of three subarrays with $N \sx=\sx N_1 + N_2 + N_3$ sensors,
where \(N_i\) is the number of sensors in the subarray $i$ \((i=1,2,3)\).
The sensor positions, denoted by \(\mathbb{A}_1\), \(\mathbb{A}_2\) and \(\mathbb{A}_3\), are given as follows
\begin{equation}\nonumber
\label{wh5}
\begin{aligned}
&\mathbb{P}=\mathbb{A}_1\cup\mathbb{A}_2\cup\mathbb{A}_3,\\
&\mathbb{A}_1\sx=\sx \{ 0:M_1\sx-\sx 1 \} \cup\{ M_1:M_1\sx+\sx 1:M_1\sx+\sx(M_1\sx+\sx 1)(M_2\sx-\sx 1) \}\\
&\ \ \ \ \ \ \cup \{M_1\sx+\sx(M_1\sx+\sx 1)(M_2\sx-\sx 1)\sx+\sx 1:2M_1\sx+\sx (M_1\sx+\sx 1)(M_2\sx-\sx 1) \},\\
&\mathbb{A}_2\sx=\sx \{\delta_1: \eta_1: \delta_1+\eta_1(N_2-1)\},\\
&\mathbb{A}_3\sx=\sx \{\delta_2: \eta_2: \delta_2+\eta_2(N_3-1)\},\\
&\delta_1\sx=\sx 6M_1+3(M_1+1)(M_2-1)+1,\\
&\eta_1\sx=\sx 4M_1+2(M_1+1)(M_2-1)+1,\\
&\delta_2\sx=\sx(12N_2+8)M_1+(6N_2+4)(M_1+1)(M_2-1)\sx+\sx 3N_2\sx +\sx 1,\\
&\eta_2\sx=\sx(8N_2+6)M_1+(4N_2+3)(M_1+1)(M_2-1)\sx +\sx 2N_2\sx +\sx 1.
\end{aligned}
\end{equation}
where $M_1$ and $M_2$ represent the number of physical sensors in subarray 1 and 2 of $\mathbb{A}_1$.
The structure of FOHA(CNA) is illustrated in Fig. \ref{fig:stru} (b).
\end{definition}

\subsubsection{Consecutive Segment of the FOHA(CNA)}
\begin{theorem}
FOHA(CNA) designed based on the FODCA is hole-free.
\end{theorem}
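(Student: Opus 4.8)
The plan is to reduce this claim to Lemma~\ref{lea:dof}, exactly as was done for FOHA(NA) in the preceding theorem. Since the CNA enters only as the generator $\mathbb{A}_1$, while $\mathbb{A}_2$ and $\mathbb{A}_3$ retain the same hierarchical form as in the general FOHA construction, it suffices to verify that the three sufficient conditions hold for the concatenated nested array: (I) $\lambda_1 \geq \max(\mathbb{A}_1)$, (II) $\lambda_2 \geq \max(\mathbb{A}_1)$, and (III) $\lambda_2 \geq \lambda_1/2$. Once these are established, Lemma~\ref{lea:dof} immediately guarantees that the FODCA of the full array $\mathbb{P} = \mathbb{A}_1 \cup \mathbb{A}_2 \cup \mathbb{A}_3$ is hole-free, and its maximum consecutive segment follows from the same lemma.

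First I would read off $\max(\mathbb{A}_1)$ directly from Definition~\ref{def:cna}: the largest sensor position of the CNA is $\max(\mathbb{A}_1) = 2M_1 + (M_1+1)(M_2-1)$, given by the right endpoint of the third (dense) segment. The substantive step is then to compute the two co-array parameters $\lambda_1$ and $\lambda_2$, which (following the interpretation used inside the proof of Lemma~\ref{lea:dof}) denote respectively the length of the maximal consecutive segment $\{0 : \lambda_1\}$ of the sum co-array $\Sigma_2(\mathbb{A}_1)$ and the half-length of the maximal consecutive segment $\{-\lambda_2 : \lambda_2\}$ of the difference co-array $\Delta_2(\mathbb{A}_1)$. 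This requires a hole-free analysis of the three-block concatenated structure: the two inner dense ULAs of unit spacing must interleave with the middle segment of spacing $M_1+1$ so that the pairwise sums and differences between the blocks fill every integer lag up to the claimed endpoints.

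Having pinned down closed forms for $\lambda_1$ and $\lambda_2$ (expected to be of order $(M_1+1)(M_2-1)$ plus lower-order terms in $M_1$), I would substitute them into the three inequalities and verify each by elementary algebra, mirroring the three short verifications in the FOHA(NA) proof: condition (II) should hold with near-equality $\lambda_2 = \max(\mathbb{A}_1)$, condition (I) should follow because the sum co-array reaches strictly farther than any single sensor, and condition (III) should hold comfortably since $\lambda_2$ has essentially the same magnitude as $\lambda_1$ whereas the requirement only demands $\lambda_2 \geq \lambda_1/2$.

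The main obstacle I anticipate is the middle step: rigorously establishing that the CNA's sum and difference co-arrays are genuinely consecutive up to $\lambda_1$ and $\lambda_2$. Because the generator now consists of three blocks rather than the two blocks of the nested array, the overlaps among their pairwise sums and differences must be tracked with more care to rule out gaps, and this is where the argument departs from the FOHA(NA) case. Once the exact expressions for $\lambda_1$ and $\lambda_2$ are secured, the algebraic verification of (I)--(III) is routine and the conclusion follows directly from Lemma~\ref{lea:dof}.
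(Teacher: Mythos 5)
Your proposal follows exactly the paper's own route: reduce to Lemma~\ref{lea:dof} and verify conditions (I)--(III) for the CNA generator, for which the paper records $\max(\mathbb{A}_1)=2M_1+(M_1+1)(M_2-1)$, $\lambda_1=4M_1+2(M_1+1)(M_2-1)$ and $\lambda_2=2M_1+(M_1+1)(M_2-1)$, so that the two conditions you expected to be ``near-equality'' and ``comfortable'' in fact both hold with exact equality ($\lambda_2=\max(\mathbb{A}_1)$ and $\lambda_2=\lambda_1/2$). The only substantive difference is that you explicitly flag the need to prove consecutiveness of the CNA's sum and difference co-arrays up to $\lambda_1$ and $\lambda_2$ --- a step the paper simply asserts by writing down the closed forms --- so your plan is, if anything, more careful than the published argument.
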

\begin{proof}
According to Lemma \ref{lea:dof}, FOHA(CNA) designed based on FODCA is hole-free provided that conditions (I), (II) and (III) are satisfied.
Regarding condition (I), we have
\begin{equation}\nonumber
\begin{aligned}
\lambda_2 = 2M_1+(M_1+1)(M_2-1)= \max(\mathbb{A}_1),
\end{aligned}
\end{equation}
which satisfies the requirement.

Similarly, for condition (II),
\begin{equation}\nonumber
\begin{aligned}
&\lambda_1\sx=\sx 4M_1\sx+\sx 2(M_1\sx+\sx 1)(M_2\sx-\sx 1) \sx>\sx \max(\mathbb{A}_1)\sx=\sx 2M_1\sx+\sx (M_1\sx+\sx 1)(M_2\sx-\sx 1),
\end{aligned}
\end{equation}
ensuring that condition (II) is met.

For condition (III),
\begin{equation}\nonumber
\begin{aligned}
&\lambda_2=2M_1+(M_1+1)(M_2-1) = \frac{\lambda_1}{2},
\end{aligned}
\end{equation}
which also holds.

Therefore, by Lemma \ref{lea:dof}, FOHA(CNA) designed based on FODCA is hole-free when conditions (I), (II) and (III) are satisfied.
\end{proof}

\subsubsection{The Maximum DOFs of FOHA(CNA) With the Given Number of Physical Sensors}
~\par
The DOFs of FOHA(CNA) can be further increased by optimizing the distribution of physical sensors among subarrays 1, 2 and 3
for a given number of physical sensors.
\begin{theorem}
To maximize the DOFs of FOHA (CNA) with a fixed total number of sensors,
the number of sensors for each subarray is set as
\begin{equation}\nonumber
\begin{cases}
N_1=Caculated\ by\ Algorithm\ 2,\\
N_2=\lceil\frac{-\frac{1}{2}N_1^3+(\frac{N}{2}-\frac{23}{8})N_1^2+(3N+\frac{9}{4})N_1-\frac{3}{2}N +\frac{1}{8}}{ N_1^2 +6N_1 - 3}\rfloor,\\
N_3=\lceil\frac{-\frac{1}{2}N_1^3+(\frac{N}{2}-\frac{25}{8})N_1^2+(3N+\frac{3}{4})N_1-\frac{3}{2}N -\frac{1}{8}}{ N_1^2 +6N_1 - 3}\rfloor.
\end{cases}
\end{equation}

For subarray 1, to maximize the DOFs, the parameters \(M_1\) and \(M_2\) are chosen as
\begin{equation}
M_1=\lceil\frac{N_1-1}{4}\rfloor,\ M_2=\lceil\frac{N_1+1}{2}\rfloor.
\end{equation}
\end{theorem}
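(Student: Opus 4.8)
The plan is to follow the proof of the corresponding FOHA(NA) maximum-DOF theorem verbatim in structure, substituting the CNA-specific parameters from Definition~\ref{def:cna}. First I would form the optimization problem of maximizing the consecutive-lag count $\delta_2+\eta_2(N_3-1)+\lambda_3$ supplied by Lemma~\ref{lea:dof}, now using the CNA expressions for $\delta_2$, $\eta_2$ and $\lambda_3$, subject to $N_1+N_2+N_3=N$. As in the NA case this objective is cubic in $(N_1,N_2,N_3)$ and hence non-convex, so no direct closed-form global maximizer is available.

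The second step freezes $N_1$, chosen so as to maximize the consecutive lags of the second-order SCA and DCA of the CNA generator, and reduces the problem to the two variables $N_2,N_3$ with $N_2+N_3=N-N_1$. Substituting $N_2=N-N_1-N_3$ collapses the objective to a single-variable concave quadratic $f_2(N_3)$, the CNA analogue of $f_1(N_3)$ in~(\ref{cn2}). Setting $\partial f_2/\partial N_3=0$ produces the stated rational expression for $N_3$; rounding to the nearest integer gives the displayed $N_3$, and the constraint $N_2=N-N_1-N_3$ then yields the displayed $N_2$.

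For the outer variable $N_1$, I would back-substitute the optimal $N_2,N_3,M_1,M_2$ into the objective to obtain a high-degree function $h_2(N_1)$, the CNA analogue of $h_1(N_1)$. Since this function has multiple extrema and admits no tractable closed form, the argument is completed by an exhaustive search (Algorithm~2) over the finite range $N_1\in\{2,\dots,N\}$, exactly as Algorithm~1 does for FOHA(NA). Finally, for the internal split $M_1,M_2$ of the CNA I would maximize the hole-free second-order extent $\lambda_1=4M_1+2(M_1+1)(M_2-1)$ (established in the hole-free analysis of FOHA(CNA)) subject to the sensor-count identity $N_1=2M_1+M_2$; eliminating $M_2$ leaves a concave quadratic in $M_1$ with stationary point $M_1=(N_1-1)/4$, whence $M_2=(N_1+1)/2$, and rounding gives $M_1=\lceil(N_1-1)/4\rfloor$ and $M_2=\lceil(N_1+1)/2\rfloor$.

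The main obstacle I anticipate is not any individual computation but the non-convexity in $N_1$: the reduction to a concave subproblem is legitimate only after $N_1$ is fixed, so the global optimality claim ultimately rests on combining the per-$N_1$ closed forms with the finite search of Algorithm~2 rather than on an analytic maximizer. A secondary technical point is verifying that the integer roundings of $N_2,N_3$ and of $M_1,M_2$ remain jointly feasible, i.e.\ nonnegative and consistent with $N_2+N_3=N-N_1$ and $2M_1+M_2=N_1$, and that they attain the maximum rather than merely a stationary point.
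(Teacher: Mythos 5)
Your proposal follows essentially the same route as the paper's proof: substitute $N_2=N-N_1-N_3$ to reduce the problem to the concave quadratic $f_2(N_3)$ of (\ref{fo8}), solve $\partial f_2/\partial N_3=0$ and round to obtain $N_3$ and hence $N_2$, then back-substitute to form $h_2(N_1)$ and resolve the remaining non-convexity in $N_1$ by the exhaustive search of Algorithm~2 over $\{2,\dots,N\}$. The only difference is that you additionally derive $M_1=\lceil(N_1-1)/4\rfloor$, $M_2=\lceil(N_1+1)/2\rfloor$ by maximizing $\lambda_1$ subject to $N_1=2M_1+M_2$, a step the paper's proof leaves implicit; this is a correct and welcome completion rather than a departure.
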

\begin{proof}
%
Similar to problem (P1) and (P2) for FOHA(NA),
substituting \(N_2 = N - N_1 - N_3\) into the DOFs expression yields
\begin{equation}
\label{fo8}
\begin{aligned}
f_2(N_3)&\sx \triangleq\sx ( 3/2\sx-\sx 3N_1 \sx-\sx 1/(2N_1^2))N_3^2 \sx+\sx (3N_1N \sx+\sx 1/(2N_1^2N) \sx-\sx 1/8 \\
&\sx+\sx 3/(4N_1) \sx-\sx 25/(8N_1^2) \sx-\sx 3/(2N) \sx-\sx 1/(2N_1^3))N_3 \sx-\sx 3N/2 \sx\\
&-\sx 7/4 \sx+\sx 3N_1 \sx+\sx N_1^2N/2 \sx-\sx 11N_1^2/4 \sx+\sx 3N_1N \sx-\sx N_1^3/2.
\end{aligned}
\end{equation}

Similar to the function \(f_1(N_3)\) in (\ref{cn2}), the maximum of the quadratic function \(f_2(N_3)\) with respect to \(N_3\) can be obtained by solving
\(\frac{\partial f_2(N_3)}{\partial N_3} = 0\), which yields
$N_3 = \frac{-\frac{1}{2}N_1^3 + (\frac{N}{2} - \frac{25}{8})N_1^2 + (3N + \frac{3}{4})N_1 - \frac{3}{2}N - \frac{1}{8}}{N_1^2 + 6N_1 - 3}$.
Since \(N_3 \in \mathbb{N}_+\), we take the nearest integer
$N_3 = \lceil \frac{-\frac{1}{2}N_1^3 + (\frac{N}{2} - \frac{25}{8})N_1^2 + (3N + \frac{3}{4})N_1 - \frac{3}{2}N - \frac{1}{8}}{N_1^2 + 6N_1 - 3} \rfloor$.
Accordingly, \(N_2\) can be expressed as
$N_2 = \lceil \frac{-\frac{1}{2}N_1^3 + (\frac{N}{2} - \frac{23}{8})N_1^2 + (3N + \frac{9}{4})N_1 - \frac{3}{2}N + \frac{1}{8}}{N_1^2 + 6N_1 - 3} \rfloor$.

Next, we consider the value of \(N_1\).
Based on the above analysis, the corresponding values of \(N_2\), \(N_3\), \(M_1\) and \(M_2\) can be determined for \(N_1\).
Substituting these into (\ref{fo8}) yields
\begin{equation}\nonumber
\label{fo9}
\begin{aligned}
&h_2(N_1)\sx \triangleq\sx (- 1/(4\varsigma) \sx-\sx N_3/(4\varsigma))N_1^5 \sx+\sx (N/(4\varsigma) \sx-\sx 49/(16\varsigma)\\
&\sx-\sx 49N_3/(16\varsigma)\sx+\sx NN_3/(4\varsigma))N_1^4\sx+\sx (3N/\varsigma\sx-\sx N_3/2\sx-\sx 33/(4\varsigma)\\
&\sx-\sx 33N_3/(4\varsigma) \sx+\sx 3NN_3/\varsigma \sx-\sx 1/2)N_1^3 \sx+\sx (55/(8\varsigma)\sx-\sx 21N_3/8\sx+\sx N/2\\
&\sx-\sx 11/4 \sx+\sx 55N_3/(8\varsigma)\sx+\sx 15N/(2\varsigma)\sx+\sx NN_3/2 \sx+\sx 15NN_3/(2\varsigma))N_1^2 \\
&\sx+\sx (15N_3/4\sx-\sx 9NN_3/\varsigma \sx-\sx 3/(2\varsigma) \sx+\sx 3 \sx+\sx 3N \sx-\sx 3N_3/(2\varsigma)\sx-\sx 9N/\varsigma \\
&\sx+\sx 3NN_3)N_1 \sx-\sx 3N/2 \sx-\sx 3NN_3/2 \sx-\sx 7/4 \sx+\sx 3/(16\varsigma)\sx +\sx 3N_3/(16\varsigma) \\
&\sx+\sx 9N/(4\varsigma) \sx+\sx 9NN_3/(4\varsigma) \sx-\sx 13N_3/8,
\end{aligned}
\end{equation}
where $\varsigma=-N_1^2 - 6N_1 + 3$.
At this stage, the objective is to maximize the function \(h_2(N_1)\).
Similar to \(h_1(N_1)\), Algorithm~2 is developed to search for the maximum value of $h_2(N_1)$ with respect to $N_1$ in $[2, N]$,
and the corresponding maximum aperture of FOHA(CNA) is obtained.

Based on the above analysis, the number of physical sensors $N_1$, $N_2$ and $N_3$ in subarray 1, 2 and 3,
corresponding to the maximum DOFs of FOHA(CNA), are determined.
\end{proof}

\begin{table}
\begin{center}
\label{tab1}
\renewcommand{\arraystretch}{0.8}
\begin{tabular}{ l }
\hline
\textbf{Algorithm 2: Search for the optimal array parameters } \\
\ \ \ \ \ \ \ \ \ \ \ \ \ \ \ \ \ \textbf{of FOHA(CNA)}  \\
\hline
\bf{Input}: $N$. \\
Initialization: $\text{DOFs}^*=$ 0, $N_1^*=$ 0, $N_2^*=$ 0, $N_3^*=$ 0. \\
\ \ \ \ for $N_1=1$ to $N$.   \\
\ \ \ \ \ \ \ \ $N_2=\lceil\frac{-\frac{1}{2}N_1^3+(\frac{N}{2}-\frac{23}{8})N_1^2+(3N+\frac{9}{4})N_1-\frac{3}{2}N +\frac{1}{8}}{ N_1^2 +6N_1 - 3}\rfloor$.\\
\ \ \ \ \ \ \ \ $N_3=\lceil\frac{-\frac{1}{2}N_1^3+(\frac{N}{2}-\frac{25}{8})N_1^2+(3N+\frac{3}{4})N_1-\frac{3}{2}N -\frac{1}{8}}{ N_1^2 +6N_1 - 3}\rfloor$. \\
\ \ \ \ \ \ \ \ $M_1=\lceil\frac{N_1-1}{4}\rfloor$, $M_2=\lceil\frac{N_1+1}{2}\rfloor$. \\
\ \ \ \ \ \ \ \ $\delta_1=6M_1+3(M_1+1)(M_2-1)+1$.\\
\ \ \ \ \ \ \ \ $\lambda_1\sx=\sx 4M_1\sx+\sx 2(M_1\sx+\sx 1)(M_2\sx-\sx 1)$.\\
\ \ \ \ \ \ \ \ $\lambda_2\sx=\sx 2M_1\sx+\sx (M_1\sx+\sx 1)(M_2\sx-\sx 1)$.\\
\ \ \ \ \ \ \ \ $\eta_1=4M_1+2(M_1+1)(M_2-1)+1$.\\
\ \ \ \ \ \ \ \ $\delta_2\sx=\sx(12N_2\sx+\sx 8)M_1\sx+\sx(6N_2\sx+\sx 4)(M_1\sx +\sx 1)(M_2\sx -\sx 1)\sx+\sx 3N_2\sx +\sx 1$.\\
\ \ \ \ \ \ \ \ $\eta_2\sx=\sx(8N_2\sx +\sx 6)M_1\sx+\sx (4N_2\sx+\sx 3)(M_1\sx+\sx 1)(M_2\sx-\sx 1)\sx +\sx 2N_2\sx +\sx 1$.\\
\ \ \ \ \ \ \ \ $\lambda_3= \delta_1+\eta_1(N_2-1)$, $\lambda_4= \delta_1+\eta_1(N_2-1)+\lambda_2$.\\
\ \ \ \ \ \ \ \ $E=\delta_2 + \eta_2(N_3 - 1) + \lambda_3$. \\
\ \ \ \ \ \ \ \ $\text{DOFs}=2E+1$. \\
\ \ \ \ \ \ \ \ if \ DOFs $>$ $\text{DOFs}^*$\\
\ \ \ \ \ \ \ \ \ \ \ \ $\text{DOFs}^* =$ DOFs.\\
\ \ \ \ \ \ \ \ \ \ \ \ $N_1^*=N_1$, $N_2^*=N_2$, $N_3^*=N_3$.\\
\ \ \ \ \ \ \ \ end\\
\ \ \ \ end \\
\textbf{Output}: $N_1^*$, $N_2^*$, $N_3^*$, $\text{DOFs}^*$ $\rhd$ optimal array parameters.\\
\hline
\end{tabular}
\end{center}
\end{table}

\subsection{Examples of FOHA}
The array structures of FOHA(NA) and FOHA(CNA) designed using the proposed scheme with \(N = 9\) physical sensors are illustrated in Fig.~\ref{fig:eg}.
The corresponding DOFs are 397 and 413, respectively,
which are higher than the DOFs of 307 for FO-Fractal(NA) in \cite{Yang2023} and 375 for SD-FOSA(CNA-NA) in \cite{ChenH2025}.
\begin{figure*}
  \centering
  \subfigure[]{
    \includegraphics[scale=0.435]{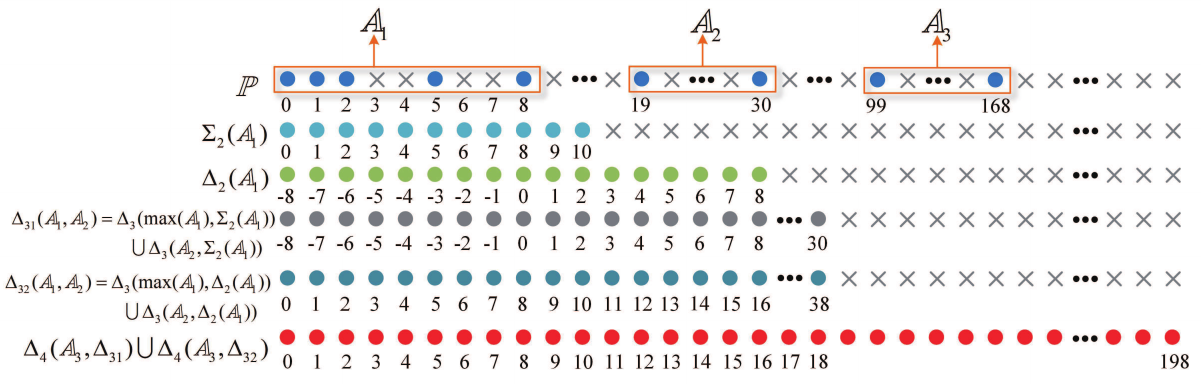}}
  \hspace{-0.1in} 
  \subfigure[]{
    \includegraphics[scale=0.435]{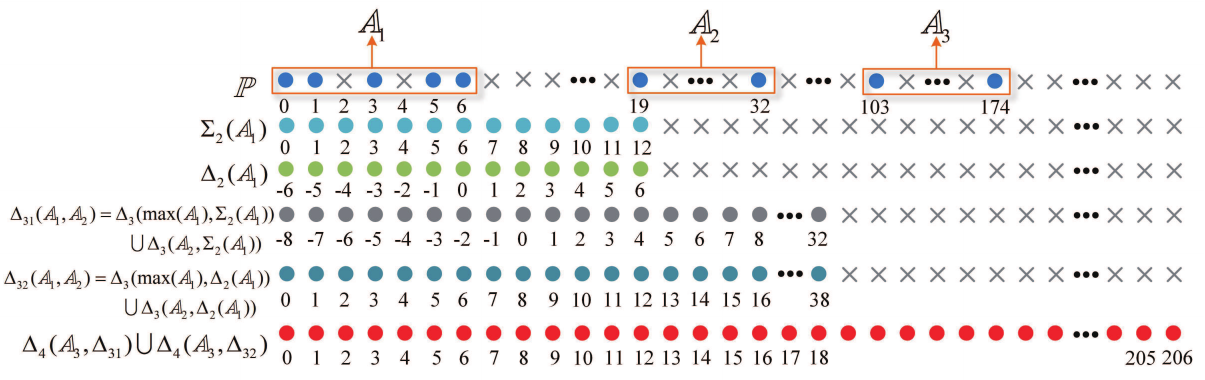}}
 \caption{\label{1} Structure of FOHA with different generators when $N=9$. (a) FOHA(NA). (b) FOHA(CNA). }
\label{fig:eg}
\end{figure*}

\section{Necessary and Sufficient Conditions for Signal Reconstruction}
In array signal processing, the problem of signal reconstruction can be fundamentally reformulated as the estimation of the DOA.
Specifically, for the received signal $\mathbf{x}(t)$ with an $N$ sensors array denoted as (\ref{w2}),
and $\mathbf{s}(t) \in \mathbb{C}^D$ is the source signals vector from $D$ narrowband far-field sources,
$\boldsymbol{\theta}_{[1:D]} = [\theta_1, \dots, \theta_D]^T $ denoting the unknown DOAs,
and $\mathbf{A}(\boldsymbol{\theta}) = [\mathbf{a}(\theta_1), \dots, \mathbf{a}(\theta_D)] \in \mathbb{C}^{N \times D} $ is the array manifold matrix,
composed of steering vectors determined by the array geometry and $\boldsymbol{\theta}_{[1:D]}$.
Once $\boldsymbol{\theta}_{[1:D]}$ are accurately estimated
and the array manifold matrix $\mathbf{A}(\boldsymbol{\theta})$ satisfies the rank condition,
the source signals vector $\mathbf{s}(t)$ can be reconstructed from the received signal $\mathbf{x}(t)$.
At this point, whether the signal can be reconstructed is equivalent to whether there is a one-to-one correspondence between $\mathbf{x}(t) $ and $\boldsymbol{\theta}_{[1:D]}$.
This requires several conditions are satisfied as follows \cite{Krim1996}, \cite{Tuncer2009}, \cite{ThompsonAR2017}.

Firstly, the signal characteristics must be taken into account.
The signal should satisfy the narrowband assumption, i.e., $B \ll f_c $,
where $B$ is the signal bandwidth and $f_c$ is the carrier frequency.
Secondly, the relationship between the array dimension and the number of source signals is crucial.
The number of sensors $N$ should be greater than or equal to the number of source signals $D$, i.e., $N \geq D$.
Finally, the array manifold matrix $\mathbf{A}$ should have full column rank, i.e., $\text{rank}(\mathbf{A}) = D $ \cite{Horn2012}.
Under above conditions, the linear system (\ref{w2}) ensures a unique solution, establishing the one-to-one correspondence between $\mathbf{x}(t)$  and $\mathbf{\theta}_{[1:D]}$.
Finally, the array geometry plays a vital role.
When the inter-spacing $d$ exceeds half the wavelength ($\lambda/2$) in any array,
the problem of ambiguous angles may arise,
causing the array steering matrix $\mathbf{A}$ to lose full rank.
This leads to a situation where the one-to-one correspondence between $\mathbf{x}(t)$ and $\boldsymbol{\theta}_{[1:D]}$ is no longer satisfied,
making it impossible to uniquely reconstruct the source signals.
Therefore, the geometry and the inter-sensor spacing $d$ should satisfy a certain threshold to prevent the occurrence of grid flap problems,
ensuring that DOA estimation remains unambiguous.

To sum up, there exists necessary and sufficient conditions for signal reconstruction as follows.
\begin{remark}
The function \(\mathrm{LCM}(\beta_1,\beta_2)\) for calculating the least common multiple of two numbers \(\beta_1\) and \(\beta_2\) is proposed in Algorithm 3. Similarly, the function \(\mathrm{LCM\mbox{-}multi}(\text{numbers})\) for the least common multiple of multiple numbers, and the function \(\mathrm{LCM\mbox{-}seq}(\beta_1,\beta_2,N)\) for that of an \(N\)-term arithmetic sequence with first term \(\beta_1\) and common difference \(\beta_2\), are also proposed in Algorithm 3.
\end{remark}

\begin{theorem}
For any array $\mathbb{P}=\{ p_1, p_2, \dots, p_N \}$, the sufficient and necessary condition for signal reconstruction is
\begin{equation}
\label{st14}
\operatorname{LCM-multi}\left( \frac{\lambda}{p_{l_1}}, \frac{\lambda}{p_{l_2}}, \dots, \frac{\lambda}{p_{l_N}} \right) \geq 2,
\end{equation}
that is,
$\min_{c_1, c_2, \dots, c_N \in \mathbb{Z}^+} \left\{ \frac{c_1 \lambda}{p_{l_1}} = \frac{c_2 \lambda}{p_{l_2}} = \dots = \frac{c_N \lambda}{p_{l_N}} \right\} \geq 2$.
\end{theorem}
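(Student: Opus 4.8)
The plan is to recognize that, once the narrowband, $N\ge D$, and full-column-rank hypotheses already collected in this section are in force, signal reconstruction is equivalent to the injectivity of the array-manifold map $\theta \mapsto \boldsymbol{a}(\theta)$ over the admissible angular range. Indeed, a unique $\boldsymbol{s}(t)$ can be recovered precisely when the correspondence between $\boldsymbol{x}(t)$ and $\boldsymbol{\theta}_{[1:D]}$ is one-to-one, and the only way this can fail (the geometry having been singled out as the remaining obstruction) is through a pair of distinct DOAs producing identical steering vectors. Thus I would first reduce the claim to: reconstruction holds iff there is \emph{no} pair $\theta\neq\theta'$ with $\boldsymbol{a}(\theta)=\boldsymbol{a}(\theta')$.

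Next I would translate this coincidence into an arithmetic condition. By the steering response, $\boldsymbol{a}(\theta)=\boldsymbol{a}(\theta')$ holds componentwise iff $e^{j\frac{2\pi p_{l_n}}{\lambda}\sin\theta}=e^{j\frac{2\pi p_{l_n}}{\lambda}\sin\theta'}$ for every sensor, i.e. $\frac{p_{l_n}}{\lambda}(\sin\theta-\sin\theta')\in\mathbb{Z}$ for all $n$. Writing $u:=\sin\theta-\sin\theta'$ and using that $\sin$ maps the physical range into $(-1,1)$ gives $u\in(-2,2)$, with a genuine ambiguity corresponding to $u\neq 0$. Hence ambiguity exists iff the set $S:=\{u:\tfrac{p_{l_n}}{\lambda}u\in\mathbb{Z}\ \forall n\}$ contains a nonzero point of $(-2,2)$.

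The core step is to identify $S$ with a lattice. Since $S=\bigcap_n \frac{\lambda}{p_{l_n}}\mathbb{Z}$ and all positions are mutually commensurable (each being an integer multiple of the unit spacing), this intersection is the rank-one lattice $L\mathbb{Z}$, where $L=\operatorname{LCM-multi}\!\left(\frac{\lambda}{p_{l_1}},\dots,\frac{\lambda}{p_{l_N}}\right)$ is the least positive common multiple; by definition $L$ is the smallest $V>0$ expressible as $V=c_n\frac{\lambda}{p_{l_n}}$ with $c_n\in\mathbb{Z}^+$, which is exactly the quantity in the ``that is'' reformulation. The smallest nonzero element of $S$ in modulus is therefore $L$, so $S$ meets $(-2,2)\setminus\{0\}$ iff $L<2$. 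For necessity I would exhibit the ambiguous pair explicitly: when $L<2$ one may take $\sin\theta=L/2$ and $\sin\theta'=-L/2$, both in $(-1,1)$, giving distinct angles with $u=L\in S$ and hence $\boldsymbol{a}(\theta)=\boldsymbol{a}(\theta')$. Combining both directions yields the stated equivalence $\operatorname{LCM-multi}(\cdot)\ge 2$.

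The main obstacle I anticipate is the careful treatment of the boundary together with the commensurability argument: one must justify (i) why the admissible range is the \emph{open} interval $(-2,2)$, so that the threshold is $\ge 2$ rather than $>2$ --- equivalently, why the endfire pair $\theta=\pm 90^\circ$ (which is always ambiguous for positions on a $\lambda/2$ grid) is excluded by convention --- and (ii) that $\bigcap_n \frac{\lambda}{p_{l_n}}\mathbb{Z}$ is genuinely generated by the LCM, a step that relies on commensurability and would fail for incommensurable positions. A secondary technicality is that the reference sensor at position $0$ imposes only the trivial constraint and must be dropped, so the indices $l_1,\dots,l_N$ are understood to range over the nonzero sensor positions.
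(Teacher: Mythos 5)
Your proposal is correct and follows essentially the same route as the paper's proof: reconstruction is reduced to injectivity of the steering map $\theta\mapsto\boldsymbol{a}(\theta)$, ambiguity is translated into the congruence $\frac{p_{l_n}}{\lambda}\left(\sin\theta-\sin\theta'\right)\in\mathbb{Z}$ for all $n$, and the threshold $2$ comes from comparing the least common multiple of the $\lambda/p_{l_n}$ with the range of $\sin\theta-\sin\theta'$. Your write-up is in fact tighter than the paper's, which asserts the aliasing equivalence and the ``must span at least one full period'' step without the explicit lattice identification $S=L\mathbb{Z}$ or the concrete ambiguous pair $\sin\theta=\pm L/2$ that you supply.
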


\begin{proof}
\textbf{Sufficiency:}
When (\ref{st14}) is satisfied,
it ensures that the array can uniquely resolve all directions within the angular field of view, without aliasing.
Therefore, for any $\theta_i \in [-\frac{\pi}{2}, \frac{\pi}{2}]$, the mapping from $\theta_i$ to the steering vector $\mathbf{a}(\theta_i)$ is injective.

Hence, no two distinct $\theta_i \neq \theta_i'$ can yield the same steering vector, i.e., $\mathbf{a}(\theta_i) \neq \mathbf{a}(\theta_i')$,
and the array manifold $\mathbf{A}$ is uniquely determined by $\mathbf{\theta}_{[1:D]}$.
Therefore, there is a a unique correspondence between the received signal vector $\mathbf{x}(t)$ and $\mathbf{\theta}_{[1:D]}$
when a full-rank signal matrix $\mathbf{S}$ is given,
ensuring the source signals $\mathbf{s}(t)$ can be reconstructed from the received signal $\mathbf{x}(t)$.

\textbf{Necessity:}
When the signal is uniquely reconstructable, i.e., there exists a bijective mapping between $\mathbf{x}(t)$ and $\mathbf{\theta}_{[1:D]}$,
which implies that the steering vector satisfies
\[
\mathbf{a}(\theta_i) \neq \mathbf{a}(\theta_i'), \quad \forall\ \theta_i \neq \theta_i' \in \left[-\frac{\pi}{2}, \frac{\pi}{2}\right].
\]

By using the method of contradiction, assuming
\begin{equation}
\label{st13}
\mathcal{F} = \operatorname{LCM-multi}\left( \frac{\lambda}{p_{l_1}}, \dots, \frac{\lambda}{p_{l_N}} \right) < 2.
\end{equation}

Then, spatial aliasing arising from periodic spatial under-sampling occurs.
There may exist distinct $\theta_i \neq \theta_i'$ such that for all $n = 1,\dots,N$ and some integers $k_{l_n} \in \mathbb{Z}$ \cite{ThompsonAR2017},
\[
2\pi \frac{p_{l_n}}{\lambda} \sin(\theta_i) = 2\pi \left( \frac{p_{l_n}}{\lambda} \sin(\theta_i') + k_{l_n} \right),
\]
or equivalently $\sin(\theta_i) \equiv \sin(\theta_i') \mod \left( \frac{\lambda}{p_{l_n}} \right)$.

This indicates that two distinct angles $\theta_i \ne \theta_i'$ yield identical steering vectors, i.e., $\mathbf{a}(\theta_i) = \mathbf{a}(\theta_i')$,
thereby violating the injective mapping from angle $\theta$ to its steering vector $\mathbf{a}(\theta)$.
To eliminate angular ambiguity, the total unambiguous angular range,
governed by the least common multiple of the spatial sampling intervals,
must span at least one full period, i.e.,
$\mathcal{F} \geq 2$, which contradicts the constraint in (\ref{st13}).
Therefore, the least common multiple condition is both necessary and sufficient for guaranteeing a unique mapping between the $\mathbf{x}(t)$  and $\mathbf{\theta}_{[1:D]}$
to reconstruct the source signals $\mathbf{s}(t)$.
\end{proof}

For FOHA with $N$ sensors located at the set $\mathbb{P}=\mathbb{A}_1\cup \mathbb{A}_2 \cup \mathbb{A}_3$,
the necessary and sufficient conditions for signal reconstruction are given by the following theorem.

\begin{theorem}
When the physical sensor position sets \(\mathbb{P} \) of FOHA(NA) and FOHA(CNA)
are defined as in Definition~\ref{def:na} and \ref{def:cna},
the source signals \(\mathbf{s}(t)\) can be uniquely reconstructed from \(\mathbf{x}(t)\) and \(\boldsymbol{\theta}_{[1:D]}\)
if and only if there exist coefficients \(\{c_n\}_{n=1}^N\) and a positive integer \(k\) such that
\begin{equation}\label{eq:coef_condition_new}
\begin{cases}
c_{n_1} = \dfrac{k \cdot p_{l_{n_1}}}{\epsilon_1}, \quad n_1 = 1,2,...,N_1,\\
c_{n_2} = \dfrac{k \cdot p_{l_{n_2}}}{\epsilon_2}, \quad n_2 = N_1\sx+\sx 1, ..., N_1\sx+\sx N_2,\\
c_{n_3} = \dfrac{k \cdot p_{l_{n_3}}}{\epsilon_3}, \quad n_3 = N_1\sx+\sx N_2\sx+\sx 1, ..., N_1\sx+\sx N_2\sx+\sx N_3,\\
k \geq \dfrac{2\cdot \operatorname{LCM-multi}(\epsilon_1,\epsilon_2,\epsilon_3)}{\lambda},
\end{cases}
\end{equation}
where $\zeta_1=\operatorname{LCM-seq}(1, 1, M_1- 1) $, $\zeta_2= \operatorname{LCM-seq}(2M_1- 1, M_1, M_2M_1+ M_1- 1)$,
$\zeta_3=\operatorname{LCM-seq}(M_1, M_1+1, M_2) $, $\zeta_4= \operatorname{LCM-seq}(M_1+(M_1+1)(M_2-1)+1, 1, M_1)$,
\begin{align*}
\epsilon_1 &\sx=
\begin{cases}
\sx \frac{\zeta_1 \cdot \zeta_2}{\text{gcd}(\zeta_1,\zeta_2)},\ \text{for FOHA(NA)},\\
\sx \frac{\frac{\zeta_1 \cdot \zeta_3}{\text{gcd}(\zeta_1,\zeta_3)}\cdot \zeta_4}{\text{gcd}(\frac{\zeta_1 \cdot \zeta_3}{\text{gcd}(\zeta_1,\zeta_3)}, \zeta_4)},\ \text{for FOHA(CNA)},
\end{cases}\\
\epsilon_2 &= \operatorname{LCM-seq}(\delta_1, \eta_1, N_2),\
\epsilon_3 = \operatorname{LCM-seq}(\delta_2, \eta_2, N_3).
\end{align*}
\end{theorem}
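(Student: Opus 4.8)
\emph{Proof plan.} This statement is a specialization of the preceding general theorem to the concrete geometries of Definitions~\ref{def:na} and~\ref{def:cna}, so the plan is to reduce the claim to the scalar criterion $\operatorname{LCM-multi}(\lambda/p_{l_1},\dots,\lambda/p_{l_N})\ge 2$ already established there, and then to evaluate that quantity by exploiting the block structure $\mathbb{P}=\mathbb{A}_1\cup\mathbb{A}_2\cup\mathbb{A}_3$. First I would recall that reconstruction is equivalent to the absence of spatial aliasing, i.e.\ to the existence of positive integers $c_n$ and a common value $V$ with $c_n\lambda/p_{l_n}=V$ for all $n$ and $V\ge 2$, where the minimal admissible $V$ is exactly $\operatorname{LCM-multi}(\lambda/p_{l_1},\dots,\lambda/p_{l_N})$. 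The whole argument then reduces to rewriting this single global least common multiple in the hierarchical, subarray-indexed form stated in the theorem.

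Next I would use the associativity of the least common multiple to split the global computation across the three blocks, computing one partial scale per block. For the two outer subarrays this is immediate: $\mathbb{A}_2$ and $\mathbb{A}_3$ are arithmetic progressions with first terms $\delta_1,\delta_2$, common differences $\eta_1,\eta_2$ and lengths $N_2,N_3$, so their within-block scales are precisely $\epsilon_2=\operatorname{LCM-seq}(\delta_1,\eta_1,N_2)$ and $\epsilon_3=\operatorname{LCM-seq}(\delta_2,\eta_2,N_3)$ by the definition of $\operatorname{LCM-seq}$ in Algorithm~3. For the generator block $\mathbb{A}_1$ I would decompose it into its arithmetic pieces---two for the NA and three for the CNA---evaluate each piece as $\zeta_1,\zeta_2$ (NA) or $\zeta_1,\zeta_3,\zeta_4$ (CNA), and fuse them by repeated application of $\operatorname{LCM}(a,b)=ab/\gcd(a,b)$. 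Since this is exactly how $\epsilon_1$ is written in the statement, this step certifies the closed forms for $\epsilon_1$ in both configurations.

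Finally I would translate the scalar inequality into the existence form of the theorem. Within block $i$ the common-value requirement forces $c_n\propto p_{l_n}$, and the natural normalization against the within-block scale is $c_n=k\,p_{l_n}/\epsilon_i$; because $\epsilon_i$ is a common multiple of the positions of $\mathbb{A}_i$, this choice aligns the phase increments across the block and makes the $c_n$ positive integers for the admissible values of the shared integer $k$. Merging the three blocks, the single global unambiguous period is then governed by $\operatorname{LCM-multi}(\epsilon_1,\epsilon_2,\epsilon_3)$, and the whole-array requirement $V\ge 2$ becomes precisely the lower bound $k\ge 2\operatorname{LCM-multi}(\epsilon_1,\epsilon_2,\epsilon_3)/\lambda$ appearing in the last line of \eqref{eq:coef_condition_new}. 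The converse direction reverses each of these equivalences.

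\emph{Main obstacle.} The delicate point is the reconciliation between the per-block least common multiples $\epsilon_i$ and the single common value $V$ demanded by the general theorem. I must verify that the piecewise assignment $c_n=k\,p_{l_n}/\epsilon_i$ simultaneously (i) yields positive integers inside each block---which requires $k$ to clear the cofactors $\epsilon_i/p_{l_n}$---and (ii) remains compatible across the three blocks exactly when $k$ meets the stated threshold, so that the three local periods $\lambda/\epsilon_i$ merge into one global period controlled by $\operatorname{LCM-multi}(\epsilon_1,\epsilon_2,\epsilon_3)$. Equivalently, the crux is showing that the hierarchical evaluation---first within each subarray, then across the three $\epsilon_i$---reproduces the whole-array quantity $\operatorname{LCM-multi}(\lambda/p_{l_1},\dots,\lambda/p_{l_N})$ without loss, which in turn rests on the closed-form piece values $\zeta_1,\dots,\zeta_4$ and the $\gcd$-fusion identity for the NA and CNA generators.
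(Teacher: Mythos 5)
Your plan follows essentially the same route as the paper's proof: decompose $\mathbb{P}$ into the three subarrays, evaluate the per-block least common multiples via $\operatorname{LCM-seq}$ (with the $\zeta_i$ pieces and $\gcd$-fusion for $\mathbb{A}_1$), assign $c_n = k\,p_{l_n}/\epsilon_i$, impose the threshold $k \ge 2\operatorname{LCM-multi}(\epsilon_1,\epsilon_2,\epsilon_3)/\lambda$, and reduce both directions to the general scalar aliasing criterion of the preceding theorem. The reconciliation issue you flag as the main obstacle is in fact glossed over in the paper's own proof as well, so your outline is, if anything, slightly more explicit about where the work lies.
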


\begin{proof}
\textbf{Sufficiency:}
When there exist coefficients $\{c_n\}_{n=1}^N$ and an integer $k$ satisfying the conditions in (\ref{eq:coef_condition_new}).

We analyze the three subsets separately:

\emph{1) Analysis of $\mathbb{A}_1$:}
\textbf{Case 1 (FOHA(NA)):}
The elements \(\{p_{l_1}, \ldots, p_{l_{N_1}}\}\) in \(\mathbb{A}_1\) form two arithmetic sub-sequences.
For the first sub-sequence, the initial term is 0 and must be excluded.
Therefore, the effective sequence starts from 1 with a common difference of 1.
According to number theory results~\cite{Xiao2023}, the corresponding least common multiple is given by
$\zeta_1 = \operatorname{LCM-seq}(1, 1, M_1 - 1)$,
which can be computed using Algorithm~3.

Similarly, for the second sub-sequence, the least common multiple is expressed as
$\zeta_2 = \operatorname{LCM-seq}(2M_1 - 1, M_1, M_2M_1 + M_1 - 1)$.
Then, the overall least common multiple for \(\mathbb{A}_1\) is calculated as
$\epsilon_1 = \frac{\zeta_1 \cdot \zeta_2}{\gcd(\zeta_1, \zeta_2)}$.
The corresponding coefficients are assigned as
$c_{n_1} = \frac{k \cdot p_{l_{n_1}}}{\epsilon_1}$.

\textbf{Case 2 (FOHA(CNA)):}
Following a similar procedure, the least common multiple of \(\mathbb{A}_1\) for FOHA(CNA) can be computed using its specific structure,
resulting in the corresponding \(\epsilon_1\).

\emph{2) Analysis of $\mathbb{A}_2$:}
For the second subarray \(\mathbb{A}_2 \sx=\sx \{p_{l_{N_1+1}}, \ldots, p_{l_{N_1+N_2}}\}\),
as defined in Definition~\ref{def:na}, the least common multiple is
$\epsilon_2 \sx=\sx \operatorname{LCM-seq}(\delta_1, \eta_1, N_2)$.
The corresponding coefficients are
$c_{n_2} \sx=\sx \frac{k \cdot p_{l_{n_2}}}{\epsilon_2}$.

\emph{3) Analysis of $\mathbb{A}_3$:}
Similarly, for the third subarray \(\mathbb{A}_3 = \{p_{l_{N_1+N_2+1}}, \ldots, p_{l_{N_1+N_2+N_3}}\}\),
the least common multiple is
$\epsilon_3 = \operatorname{LCM-seq}(\delta_2, \eta_2, N_3)$,
and the corresponding coefficients are
$c_{n_3} = \frac{k \cdot p_{l_{n_3}}}{\epsilon_3}$.

\emph{4) Overall least common multiple Calculation:}
The total least common multiple across all three subarrays is given by
\begin{align*}
\operatorname{LCM-multi}(\epsilon_1,\epsilon_2,\epsilon_3)\sx = \sx \dfrac{\dfrac{\epsilon_1\sx \cdot\sx \epsilon_2}{\gcd(\epsilon_1,\epsilon_2)} \sx\cdot\sx \epsilon_3}{\gcd( \dfrac{\epsilon_1\sx\cdot\sx \epsilon_2}{\gcd(\epsilon_1,\sx \epsilon_2)}, \epsilon_3 )}.
\end{align*}

\emph{5) Condition for Source Signal Reconstruction:}
To guarantee signal reconstructability, the coefficient \(k\) must satisfy the following lower bound
$k \sx \geq\sx \dfrac{2 \cdot \operatorname{LCM-multi}(\epsilon_1,\epsilon_2,\epsilon_3)}{\lambda}$.

Consequently, the following inequality holds
$\operatorname{LCM-multi}(\dfrac{\lambda}{p_{l_1}}, \dfrac{\lambda}{p_{l_2}}, ..., \dfrac{\lambda}{p_{l_N}}) \sx \geq\sx 2$,
which satisfies the condition in Theorem~3 and ensures successful reconstruction of the source signal \(\mathbf{s}(t)\).

\begin{table}[h]
\begin{center}
\label{tab1}
\renewcommand{\arraystretch}{0.9}
\begin{tabular}{ l }
\hline
\textbf{Algorithm 3: Calculate the least common multiple } \\
\textbf{ \ \ \ \ \ \ \ \ \ \ \ \ \ \ \ \ of sequence} \\
\hline
\bf{Input}: Set $\mathbb{P}$.\\
$\beta_1=\mathbb{P}[1]$, $\beta_2=\mathbb{P}[2]-\mathbb{P}[1]$, $N=length(\mathbb{P})$. \\
\textbf{Function LCM($\boldsymbol{\beta_1}$,$\boldsymbol{\beta_2}$):}\\
\ \ \ \ return ($|\beta_1 \cdot (\beta_1+\beta_2) /gcd(\beta_1,\beta_1+\beta_2)|$)\\
end\\
\textbf{Function LCM-multi(numbers):}\\
\ \ \ \ Initialization: set result = numbers[0]\\
\ \ \ \ for each number in numbers[1:]:\\
\ \ \ \ \ \ \ \ result = LCM(result,number)\\
\ \ \ \ end\\
\ \ \ \ return result\\
end\\
\textbf{Function LCM-seq($\boldsymbol{\beta_1}$,$\boldsymbol{\beta_2}$,$\boldsymbol{N}$):}\\
\ \ \ \ Define sequence as empty list\\
\ \ \ \ for i from 0 to $N-1$:\\
\ \ \ \ \ \ \ \ add ($\beta_1\sx+\sx i\times \beta_2$) to sequence\\
\ \ \ \ end\\
\ \ \ \ return LCM-multi(sequence) $\triangleq$ LCM-seq($\beta_1,\beta_2,N$)\\
end\\
\textbf{Output}: LCM-seq($\beta_1,\beta_2,N$) Least common multiple of $\mathbb{P}$.\\
\hline
\end{tabular}
\end{center}
\end{table}

\textbf{Necessity:}
Conversely, if the source signals $\mathbf{s}(t)$ can be reconstructed according to Theorem~5,
the least common multiple satisfies
$\operatorname{LCM-multi}(\dfrac{\lambda}{p_{l_1}}, \dfrac{\lambda}{p_{l_2}}, ..., \dfrac{\lambda}{p_{l_N}}) \geq 2$.

Thus, it is necessary that $k$ be chosen sufficiently large according to
$k \geq \dfrac{2\cdot \operatorname{LCM-multi}(\epsilon_1,\epsilon_2,\epsilon_3)}{\lambda}$,
with coefficients $\{c_n\}_{n=1}^N$ assigned accordingly as in (\ref{eq:coef_condition_new}).

Hence, the conditions in (\ref{eq:coef_condition_new}) are also necessary.
This completes the proof.
\end{proof}

\section{The Fourth-Order Redundancy}
To facilitate a more comprehensive comparison of the DOFs performance across different FODCA structures,
the concept of \textit{fourth-order redundancy} is introduced as follows.
\begin{definition}
\label{def:redundancy}
According to \cite{ChenYP2021}, the fourth-order redundancy is defined as
\[
R_4:=\frac{\tilde{k}_4(N)}{U_4},
\]
where $\tilde{k}_4(N)=(k_4(N)-1)/2$ denotes the one-sided maximal size of FODCA and $U_4$ is defined in (2).
\end{definition}

The \textit{fourth-order redundancy} quantifies the discrepancy between the actual FODCA \(\mathbb{U}_4\) and the ideal maximal FODCA.
Specifically, when \( R_4 = 1 \), all fourth-order difference terms are distinct and form a consecutive ULA segment.
In contrast, if \( R_4 > 1 \), it indicates either a normalized size of \(\Delta_4\) less than 1, or the presence of holes within \(\Delta_4\).

According to the theorem 3 in reference \cite{ChenYP2021},
we can get the lower bound of $R_4$ is given by a function of $N$.
\begin{theorem}
The fourth-order redundancy $R_4$ of $\mathbb{P}$ satisfies
\begin{equation}
\begin{aligned}
R_4>L_4(N)&:=2 \left( 1+\frac{2}{3\pi} \right)\frac{\tilde{k}_4(N)}{\left( \sx \left(\begin{matrix} \small{N} \\ \small{2} \\ \end{matrix} \right)\sx \right)^2}\\
&=\left( 1+\frac{2}{3\pi} \right)\frac{(N-1)(N^2-N+6)}{N(N+1)^2},
\end{aligned}
\end{equation}
where $\left( \sx \left(\begin{matrix} \small{n} \\ \small{k} \\ \end{matrix} \right) \sx \right)\sx :=\sx \left( \begin{matrix} \small{n+k-1} \\ \small{k} \\ \end{matrix} \right)$
denotes the multiset coefficient.
\end{theorem}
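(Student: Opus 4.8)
The plan is to derive the bound in two stages, mirroring the two expressions that define $L_4(N)$: first establish the inequality $R_4 > 2\left(1+\frac{2}{3\pi}\right)\tilde{k}_4(N)/\left(\binom{N}{2}\right)^2$ by importing Theorem~3 of \cite{ChenYP2021}, and then reduce this to the stated closed form by a purely algebraic simplification. The content of the first stage is not ours to reprove: \cite{ChenYP2021} shows that any $N$-sensor array incurs an unavoidable amount of fourth-order redundancy, the floor being the factor $1+\frac{2}{3\pi}$, where the constant $\frac{2}{3\pi}$ is obtained from an analysis of the co-array weight function of the difference-of-sum-set $\Delta_4(\mathbb{P})=\Sigma_2(\mathbb{P})-\Sigma_2(\mathbb{P})$. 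I would begin by recording this inequality verbatim, with $\tilde{k}_4(N)$ the one-sided maximal FODCA size from Definition~\ref{def:redundancy} and $\left(\binom{N}{2}\right)$ the multiset coefficient normalizing the count of distinct second-order sums.

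The second stage is the explicit evaluation. I would first simplify the normalization using $\left(\binom{N}{2}\right)=\binom{N+1}{2}=\frac{N(N+1)}{2}$, so that $\left(\binom{N}{2}\right)^2=\frac{N^2(N+1)^2}{4}$. Next I would substitute the closed form $\tilde{k}_4(N)=\frac{N(N-1)(N^2-N+6)}{8}$ for the one-sided maximal number of distinct fourth-order lags, which is the extremal count established in \cite{ChenYP2021}. Forming the ratio gives
\[
\frac{\tilde{k}_4(N)}{\left(\binom{N}{2}\right)^2}=\frac{N(N-1)(N^2-N+6)/8}{N^2(N+1)^2/4}=\frac{(N-1)(N^2-N+6)}{2N(N+1)^2},
\]
after which the leading factor $2$ cancels the $2$ in the denominator, leaving exactly $\left(1+\frac{2}{3\pi}\right)\frac{(N-1)(N^2-N+6)}{N(N+1)^2}$, which is the claimed right-hand side.

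As a consistency check I would confirm the extremal count against small cases; for instance $\tilde{k}_4(2)=2$ and $\tilde{k}_4(3)=9$, the latter attained by the array $\{0,1,5\}$ whose sum-set difference realizes nine distinct positive lags. I expect the main obstacle to lie entirely in the first stage, namely the faithful transcription of \cite{ChenYP2021}: one must verify that their redundancy bound applies with $\Sigma_2(\mathbb{P})$, of maximal cardinality $\binom{N+1}{2}$, playing the role of the generating set, so that $\left(\binom{N}{2}\right)^2$ is the correct normalization and the constant $\frac{2}{3\pi}$ transfers unchanged to the fourth-order setting. The closed form of $\tilde{k}_4(N)$ is the other delicate ingredient, since the naive difference-set bound $\binom{\,\binom{N+1}{2}\,}{2}$ overcounts and must be sharpened to the stated expression; everything after that is routine arithmetic.
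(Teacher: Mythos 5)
Your proposal is correct and matches the paper's approach: the paper states this theorem as a direct import of Theorem~3 from \cite{ChenYP2021} and offers no written proof, while you perform the same import and additionally verify the algebra, correctly using $\left(\binom{N}{2}\right)=\binom{N+1}{2}=\frac{N(N+1)}{2}$ and $\tilde{k}_4(N)=\frac{N(N-1)(N^2-N+6)}{8}$ (consistent with the paper's later formula $R_4=\frac{N^4-2N^3+7N^2-6N}{8U_4}$) to recover the stated closed form.
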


In addition, for the special case of FODCA,
the fourth-order redundancy \( R_4 \) of FOHA(NA) and FOHA(CNA) is analyzed.
The maximum length of the consecutive segment in FODCA is given by
\begin{equation}
\begin{aligned}\nonumber
U_4 = \delta_2 + \eta_2(N_3 - 1) + \lambda_3,
\end{aligned}
\end{equation}
where the values of $\delta_2$, $\eta_2$ and $\lambda_3$ are shown in Definition \ref{def:na} for FOHA(NA) and in Definition \ref{def:cna} for FOHA(CNA).

Therefore, the fourth-order redundancy \( R_4 \) of the FOHA(CNA) can be derived based on Definition~\ref{def:redundancy} as follows
\begin{equation}\nonumber
\begin{aligned}
&R_4=\frac{N^4-2N^3+7N^2-6N}{8U_4}.
\end{aligned}
\end{equation}

\section{Analysis of Mutual Coupling Effect}
To analyze the mutual coupling effect of the FOHA, this section demonstrates that the generated array can inherit
the anti-coupling ability of the generator arrays \cite{Yang2023}.
For FOHA composed of subarrays $\mathbb{A}_1$, $\mathbb{A}_2$ and $\mathbb{A}_3$,
the mutual coupling effect is mainly determined by the dense part of the subarrays $\mathbb{A}_1$
since the gap between the three subarrays is generally
large enough, and the minimum inter-sensor spacings of the subarray $\mathbb{A}_2=\delta_1+\eta_1(N_2-1)$
and $\mathbb{A}_3=\delta_2+\eta_2(N_3-1)$ are $\eta_1$ and $\eta_2$,
which can be almost ignored as a larger
separation leading to less mutual coupling.
In the following, coupling leakage is adopted to quantify the mutual coupling
effect and the relationship between the coupling leakage of
the generated array $\mathbb{P}$ and its generator array $\mathbb{A}_1$ is analyzed.
\begin{proposition}
\label{propo:L}
Consider the subarray $\mathbb{A}_1$ with $N_1$ sensors,
whose coupling leakage is denoted as $L_1$,
while the subarray $\mathbb{A}_2$ and $\mathbb{A}_3$ have $N_2$ and $N_3$ sensors.
When mutual coupling limit $B$ satisfies $B<\max(\mathbb{A}_1)$,
the coupling leakage of FOHA $\mathbb{P}=\mathbb{A}_1\cup\mathbb{A}_2\cup\mathbb{A}_3$ denoted as $L$ satisfies
\begin{equation}
L=\frac{\parallel \mathbf{C}_{\mathbb{A}_1}-diag(\mathbf{C}_{\mathbb{A}_1}) \parallel_F}{\sqrt{\parallel \mathbf{C}_{\mathbb{A}_1}-diag(\mathbf{C}_{\mathbb{A}_1}) \parallel_F^2+L_1^2N_2^2+L_1^2N_3^2}}\times L_1 <L_1,
\end{equation}
where $\mathbf{C}_{\mathbb{A}_1}$ is the mutual coupling matrix of $\mathbb{A}_1$.
\end{proposition}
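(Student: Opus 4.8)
The plan is to show that, under the gap structure of the FOHA, the mutual coupling matrix $\mathbf{C}_\mathbb{P}$ of the full array decouples into a block-diagonal form, after which the stated identity follows from a direct Frobenius-norm computation. First I would verify that every pair of sensors drawn from two different subarrays, as well as every pair within $\mathbb{A}_2$ or within $\mathbb{A}_3$, is separated by more than $B$. Using Definition~2 together with the relations $\eta_1=\lambda_1+1$ and $\delta_1=\max(\mathbb{A}_1)+\lambda_1+1$ and condition (I) of Lemma~\ref{lea:dof} (i.e.\ $\lambda_1\ge\max(\mathbb{A}_1)$), the gap between $\mathbb{A}_1$ and $\mathbb{A}_2$ is $\lambda_1+1\ge\max(\mathbb{A}_1)+1>B$, and the intra-spacing of $\mathbb{A}_2$ is exactly $\eta_1=\lambda_1+1>B$; the analogous inequalities for the spacing $\eta_2$ of $\mathbb{A}_3$ and for the $\mathbb{A}_2$--$\mathbb{A}_3$ and $\mathbb{A}_1$--$\mathbb{A}_3$ gaps are larger still and follow from the closed forms of $\delta_2,\eta_2$.

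Once these spacing inequalities are in place, the banded model forces $\mathbf{C}_\mathbb{P}(p_{k_1},p_{k_2})=0$ for all cross-subarray pairs and for all within-$\mathbb{A}_2$ and within-$\mathbb{A}_3$ pairs, so that
\[
\mathbf{C}_\mathbb{P}=\begin{pmatrix}\mathbf{C}_{\mathbb{A}_1}&\mathbf{0}&\mathbf{0}\\\mathbf{0}&\mathbf{I}_{N_2}&\mathbf{0}\\\mathbf{0}&\mathbf{0}&\mathbf{I}_{N_3}\end{pmatrix},
\]
where the two trailing identity blocks arise because the only surviving entries of $\mathbf{C}_{\mathbb{A}_2}$ and $\mathbf{C}_{\mathbb{A}_3}$ are the unit diagonals $c_0=1$.

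Then I would evaluate the two Frobenius norms in the definition $L=\|\mathbf{C}_\mathbb{P}-\mathrm{diag}(\mathbf{C}_\mathbb{P})\|_F/\|\mathbf{C}_\mathbb{P}\|_F$ from (\ref{w24}). Since the off-diagonal entries of $\mathbf{C}_\mathbb{P}$ come solely from the $\mathbb{A}_1$ block, one has $\|\mathbf{C}_\mathbb{P}-\mathrm{diag}(\mathbf{C}_\mathbb{P})\|_F=\|\mathbf{C}_{\mathbb{A}_1}-\mathrm{diag}(\mathbf{C}_{\mathbb{A}_1})\|_F$, while block-diagonality gives $\|\mathbf{C}_\mathbb{P}\|_F^2=\|\mathbf{C}_{\mathbb{A}_1}\|_F^2+\|\mathbf{I}_{N_2}\|_F^2+\|\mathbf{I}_{N_3}\|_F^2$. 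Substituting $\|\mathbf{C}_{\mathbb{A}_1}\|_F=\|\mathbf{C}_{\mathbb{A}_1}-\mathrm{diag}(\mathbf{C}_{\mathbb{A}_1})\|_F/L_1$ (the definition of $L_1$) and factoring out $L_1$ rearranges $L$ into the stated closed form, with the identity blocks contributing the $N_2,N_3$ terms inside the square root.

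Finally, the strict inequality $L<L_1$ is immediate: the denominator under the square root strictly exceeds $\|\mathbf{C}_{\mathbb{A}_1}\|_F$ whenever $N_2,N_3\ge 1$, so the prefactor multiplying $L_1$ is strictly less than one. The main obstacle I anticipate is not the norm algebra but the careful bookkeeping of the first step: one must confirm from the closed-form $\delta_i,\eta_i$ that \emph{all} relevant separations strictly exceed $B$ under the single hypothesis $B<\max(\mathbb{A}_1)$, since the block-diagonal reduction—and hence the entire result—hinges on that. Everything downstream is a routine Frobenius-norm manipulation.
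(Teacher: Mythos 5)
Your proposal is correct and follows essentially the same route as the paper: block-diagonalize $\mathbf{C}_{\mathbb{P}}$ into $\mathbf{C}_{\mathbb{A}_1}$ plus identity blocks for $\mathbb{A}_2,\mathbb{A}_3$ (you are in fact more careful than the paper in checking that \emph{all} cross- and intra-subarray spacings exceed $B$), then compute the two Frobenius norms and substitute the definition of $L_1$. One caveat worth noting: since $\|\mathbf{I}_{N_2}\|_F^2=N_2$ rather than $N_2^2$, the honest algebra gives $L_1^2(N_2+N_3)$ under the square root instead of $L_1^2N_2^2+L_1^2N_3^2$ --- a slip already present in the paper's own derivation --- but the qualitative conclusion $L<L_1$ is unaffected.
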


\begin{proof}
For the subarray $\mathbb{A}_1=\{p_1,p_2,...,p_{N_1}\}$, the $N_1\times N_1$ mutual coupling matrix is
\begin{equation}
\mathbf{C}_{\mathbb{A}_1}=
\begin{pmatrix}
c_0 & c_{|p_2-p_1|} &\cdots & c_{|p_{N_1}-p_1|}\\
c_{|p_1-p_2|} & c_0 & \cdots & c_{|p_{N_1}-p_2|}\\
\vdots & \vdots & \ddots & \vdots\\
c_{|p_1-p_{N_1}|} & c_{|p_2-p_{N_1}|} & \cdots & c_0
\end{pmatrix},
\end{equation}
and the mutual coupling leakage $L_1$ is
\begin{equation}
\begin{aligned}
L_1=\frac{\parallel \mathbf{C}_{\mathbb{A}_1}-diag(\mathbf{C}_{\mathbb{A}_1}) \parallel_F}{\parallel \mathbf{C}_{\mathbb{A}_1}\parallel_F}.
\end{aligned}
\end{equation}

 For the subarray $\mathbb{A}_2=\delta_1+\eta_1(N_2-1)$ and $\mathbb{A}_3=\delta_2+\eta_2(N_2-1)$ whose inter-sensor spacing is at least $\eta_1$,
 since $B<\max(\mathbb{A}_1)$, the mutual coupling matrix can be written as an identity matrix $\mathbf{I}_{N_2}$ and $\mathbf{I}_{N_3}$ since $c_0=1$.

 When $B<\max(\mathbb{A}_1)$, the mutual coupling matrix of FOHA can be calculated as
 \begin{equation}
 \mathbf{C}_{\mathbb{P}}=
 \begin{pmatrix}
\mathbf{C}_{\mathbb{A}_1} & \bf{0} & \bf{0}\\
\bf{0} & \mathbf{C}_{\mathbb{A}_2} & \bf{0}\\
\bf{0} & \bf{0} &\mathbf{C}_{\mathbb{A}_3}
\end{pmatrix}\in \mathbb{C}^{(N_1+N_2+N_3)\times (N_1+N_2+N_3)}.
 \end{equation}

 The corresponding coupling leakage is
 \begin{equation}
 \begin{aligned}
 &L=\frac{\parallel \mathbf{C}_{\mathbb{P}}-\text{diag}(\mathbf{C}_{\mathbb{P}}) \parallel_F}{\parallel \mathbf{C}_{\mathbb{P}}\parallel_F}\\
 &=\frac{\parallel \mathbf{C}_{\mathbb{A}_1}-\text{diag}(\mathbf{C}_{\mathbb{A}_1}) \parallel_F}{\sqrt{ \parallel \mathbf{C}_{\mathbb{A}_1}\parallel_F^2 + \parallel \mathbf{I}_{N_2}\parallel_F^2 + \parallel \mathbf{I}_{N_3}\parallel_F^2}}\\
 &<\frac{\parallel \mathbf{C}_{\mathbb{A}_1}-\text{diag}(\mathbf{C}_{\mathbb{A}_1}) \parallel_F}{\parallel \mathbf{C}_{\mathbb{A}_1}\parallel_F}=L_1.
 \end{aligned}
 \end{equation}

 The relationship between $L$ and $L_1$ is expressed as

\begin{equation}\nonumber
\begin{aligned}
&\frac{1}{L^2}=\frac{\parallel \mathbf{C}_{\mathbb{A}_1}\parallel_F^2 + \parallel \mathbf{I}_{N_2}\parallel_F^2 + \parallel \mathbf{I}_{N_3}\parallel_F^2}{\parallel
\mathbf{C}_{\mathbb{A}_1}-\text{diag}(\mathbf{C}_{\mathbb{A}_1}) \parallel_F^2}\\
&=\frac{1}{L_1^2}+\frac{N_2^2 + N_3^2}{\parallel \mathbf{C}_{\mathbb{A}_1}-\text{diag}(\mathbf{C}_{\mathbb{A}_1}) \parallel_F^2}\\
&=\frac{\parallel \mathbf{C}_{\mathbb{A}_1}-\text{diag}(\mathbf{C}_{\mathbb{A}_1}) \parallel_F^2+L_1^2(N_2^2 + N_3^2)}{L_1^2 \cdot \parallel
\mathbf{C}_{\mathbb{A}_1}-\text{diag}(\mathbf{C}_{\mathbb{A}_1}) \parallel_F^2}\\
\end{aligned}
\end{equation}
\begin{equation}
\begin{aligned}
&\Rightarrow L\sx=\sx \frac{\parallel \mathbf{C}_{\mathbb{A}_1}\sx-\sx \text{diag}(\mathbf{C}_{\mathbb{A}_1}) \parallel_F}{\sqrt{\parallel \mathbf{C}_{\mathbb{A}_1}\sx-\sx \text{diag}(\mathbf{C}_{\mathbb{A}_1}) \parallel_F^2\sx+ L_1^2N_2^2\sx+\sx L_1^2N_3^2}}\sx \times \sx L_1 \sx <\sx L_1.
\end{aligned}
\end{equation}
\end{proof}

\section{PERFORMANCE COMPARISON}
In this section, numerical simulations are conducted to evaluate and compare the performance of FL-NA \cite{Piya2012},
SE-FL-NA \cite{Shen2019}, FO-Fractal(NA) \cite{Yang2023}, FO-SDE(NA) \cite{YangZ2022},
SD-FOSA(NA) \cite{ChenH2025} and SD-FOSA(CNA-NA) \cite{ChenH2025} with the proposed FOHA configurations employing different generators.
The comparison is performed in terms of DOFs, mutual coupling, resolution and root-mean-square error (RMSE),
under varying input signal-to-noise ratio (SNR), number of snapshots and number of sources.
Note that in DOA estimation, the spatial smoothing MUSIC algorithm \cite{Pal22011}, \cite{Liu2015}, \cite{Piya2012}, \cite{You2021}
is widely adopted for subspace-based estimation. In the simulations, all incident sources are assumed to have equal power,
and the number of sources is known a priori.
To quantitatively evaluate the estimation performance, the RMSE is calculated over 1000 independent Monte Carlo trials as follows:
\begin{equation}
\mathrm{RMSE} = \sqrt{\frac{1}{1000D} \sum_{j=1}^{1000} \sum_{i=1}^{D}( \hat{\theta}_i^{j} - \theta_i )^2 },
\end{equation}
where \(\hat{\theta}_i^{j}\) denotes the estimated angle of the \(i^{\text{th}}\) source in the \(j^{\text{th}}\) trial.
Following the methodology in \cite{LiuJ2017}, we focus on the DOFs achieved by each array structure,
rather than the physical aperture, to assess the overall DOA estimation performance.

\subsection{Comparison of the DOFs for Different Arrays}
We compare the DOFs of eight array configurations under a fixed number of physical sensors.
As shown in Fig.~\ref{fig:dof}, it is observed that when \( N \geq 10 \), the proposed FOHA(CNA) achieves higher DOFs than all other seven array structures.
Furthermore, when \( N \geq 12 \), the DOFs of FOHA(NA) also surpass those of the remaining six arrays, with the exception of FOHA(CNA).
Notably, the DOFs of FOHA(CNA) increase at a significantly faster rate compared to the other designs,
highlighting the effectiveness of the enhanced array construction scheme.
The performance advantage of FOHA configurations designed by the proposed scheme
becomes increasingly evident as the number of physical sensors grows.
\begin{figure}
 \center{\includegraphics[width=5.5cm]  {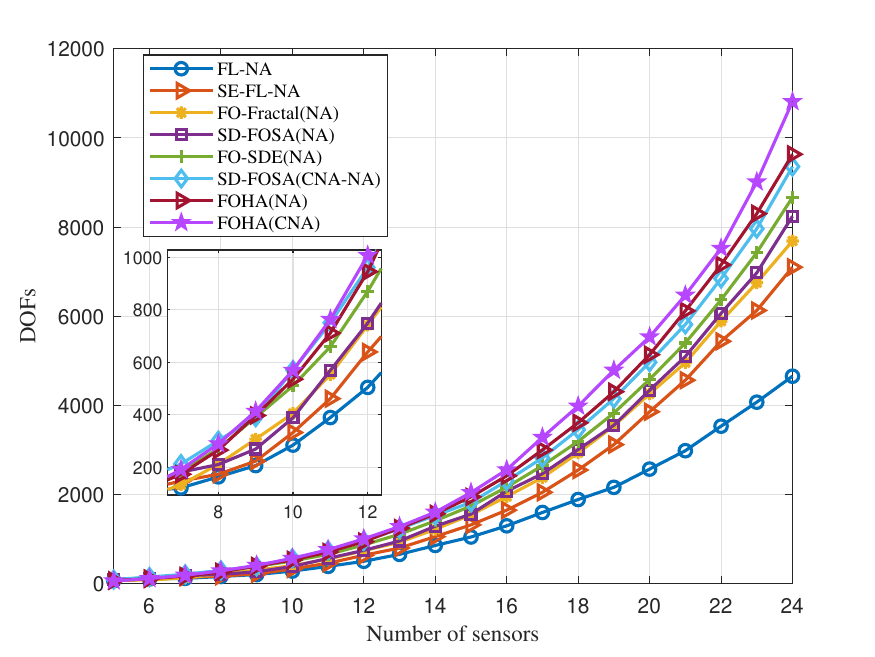}}
 \caption{\label{1} DOFs of different arrays}
 \label{fig:dof}
\end{figure}

\subsection{Comparison of Redundancy for Different Arrays}
We compare the redundancy of the proposed FOHAs with FL-NA, SE-FL-NA, FO-Fractal(NA), FO-SDE(NA), SD-FOSA(NA) and SD-FOSA(CNA-NA) under a fixed number of physical sensors.
Figure~\ref{fig:redundancy} shows the redundancy of these eight arrays as a function of the number of sensors,
along with the theoretical lower bound of fourth-order redundancy.
As the number of sensors \( N \) increases, the redundancy of all arrays increases.
For \( N \leq 8 \), SD-FOSA(CNA-NA) exhibits the lowest redundancy.
However, when \( N \geq 9 \), the redundancy of SD-FOSA(CNA-NA) surpasses that of FOHA(CNA),
and for \( N \geq 13 \), it also exceeds that of FOHA(NA).
Consequently, for \( N \geq 9 \), FOHA(CNA) achieves the lowest redundancy among all arrays while remaining above the theoretical lower bound.
\begin{figure}[h]
 \center{\includegraphics[width=5.5cm]  {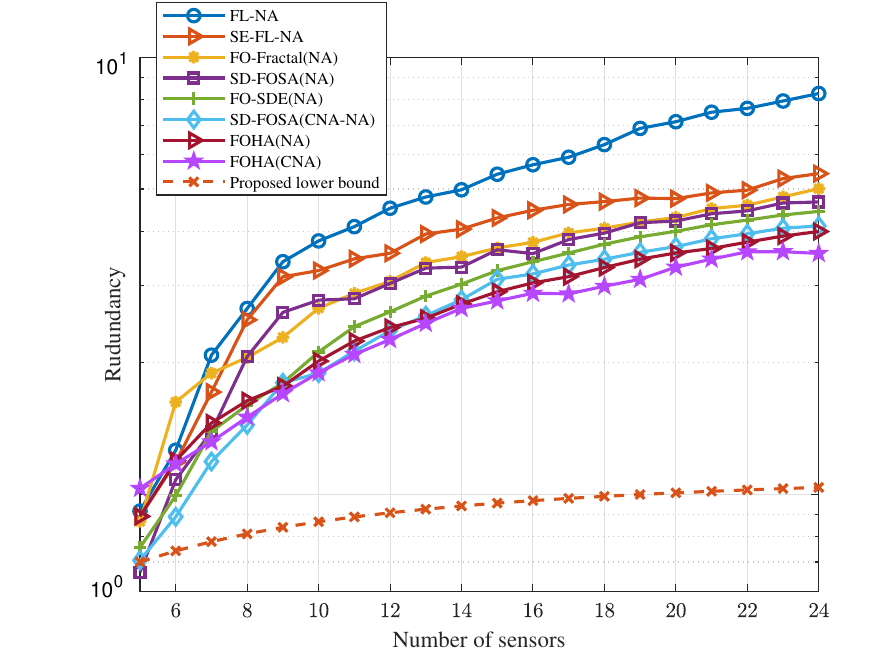}}
 \caption{\label{1} Redundancy of different arrays}
 \label{fig:redundancy}
\end{figure}

\subsection{Mutual Coupling}
This section compares the mutual coupling performance of the proposed FOHAs,
based on different generators, with state-of-the-art co-arrays in terms of coupling leakage.
The mutual coupling model in (\ref{wang5}) is defined by \( c_1 = 0.3 e^{j \pi/3} \), \( B = 100 \),
and \( c_l = c_1 e^{-j (l-1) \pi/8} / l \) for \( 2 \leq l \leq B \).
According to Proposition \ref{propo:L}, the mutual coupling leakage for each array is illustrated in Fig.~\ref{fig:mutual}.
Due to the parameters \(\delta_1, \delta_2\) and \(\eta_1, \eta_2\),
subarrays \(\mathbb{A}_2\) and \(\mathbb{A}_3\) contribute negligibly to the total coupling leakage.
Consequently, the overall coupling leakage is primarily governed by subarray \(\mathbb{A}_1\).
It is evident that, for physical sensors \( N \geq 8 \), the two proposed FOHAs exhibit lower coupling leakage compared to all other arrays considered.
\begin{figure}[h]
 \center{\includegraphics[width=5.5cm]  {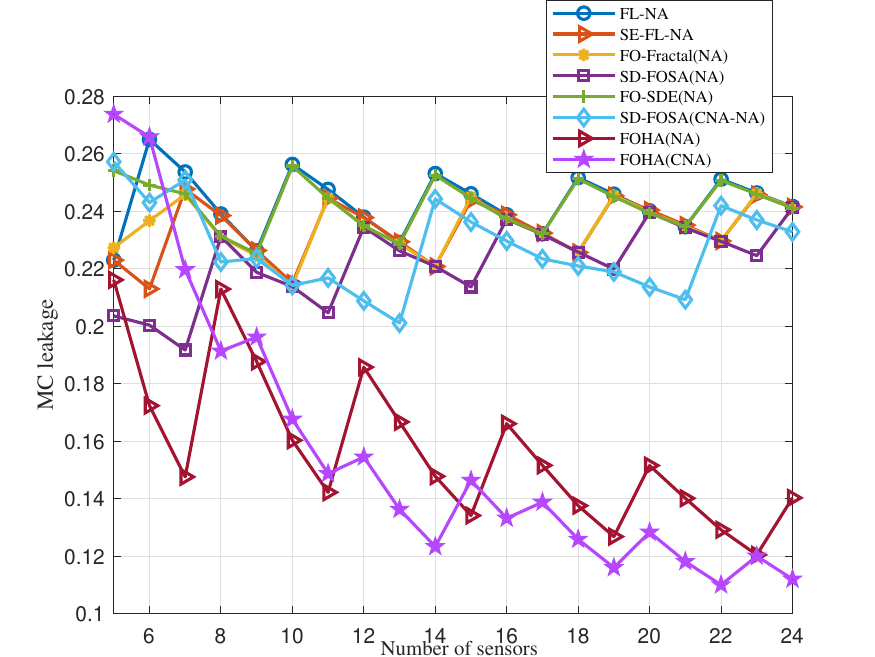}}
 \caption{\label{1} Coupling leakage versus total number of sensors}
 \label{fig:mutual}
\end{figure}

\subsection{Resolution of Different Arrays}
The RMSE respect to the angular separation are shown in Fig. \ref{fig:resolution},
where SNR $=2dB$, $K=8000$ and the number of sources is 20.
It can be seen that smaller angular separation incurs a higher RMSE,
since it is hard to distinguish between two closely spaced sources.
As shown in Fig. \ref{fig:resolution}, the RMSE decreases with the increase of angular separation
until the separation is sufficiently large.
The ranking in estimation accuracy remains consistent with previous simulations,
with the proposed FOHA(NA) and FOHA(CNA) achieving considerable accuracy even at an angular separation of $0.9^{\circ}$ and $1.1^{\circ}$,
which are much better than the other six SLAs.
\begin{figure}[h]
 \center{\includegraphics[width=5.9cm]  {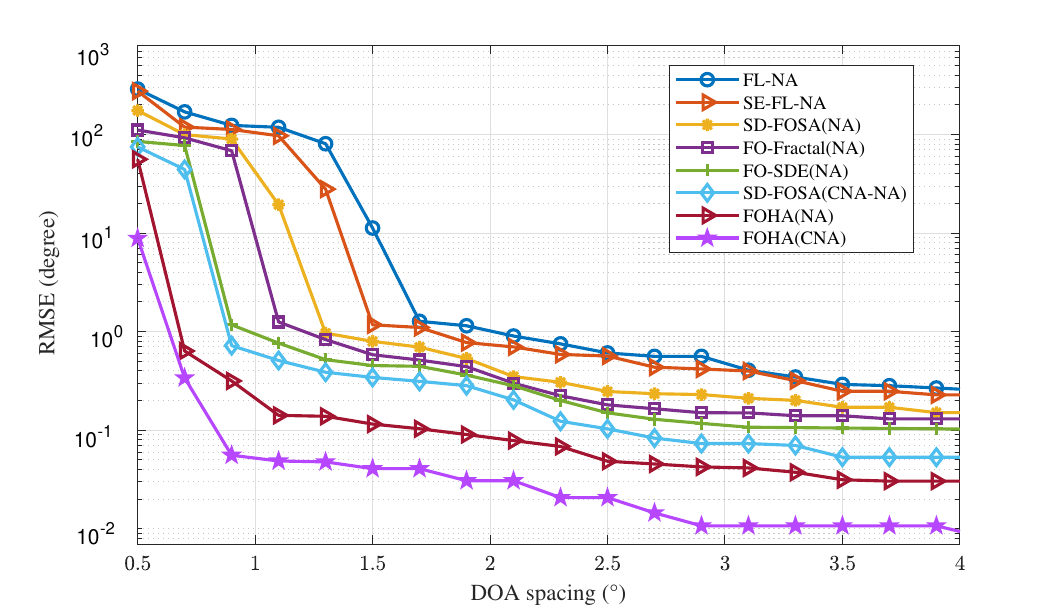}}
 \caption{ RMSE versus angular separation are D=20, SNR=2dB and K=8000}
 \label{fig:resolution}
\end{figure}

\subsection{DOA Estimation Without Mutual Coupling}
\begin{figure}[h]
  \centering
  \subfigure[ RMSE versus SNR]{
    \includegraphics[scale=0.34]{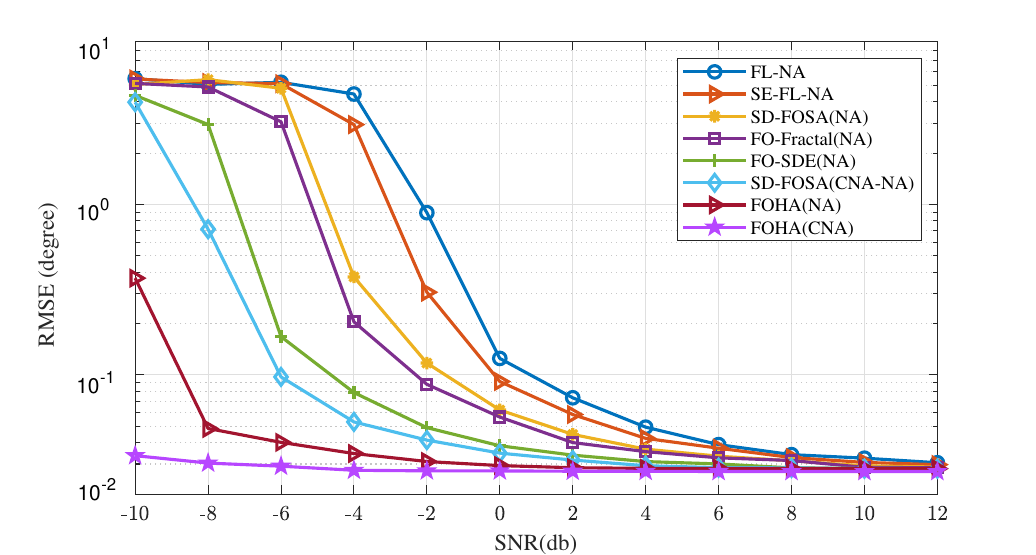}}
   \hspace{0in} 
  \subfigure[RMSE versus Snapshots]{
    \includegraphics[scale=0.34]{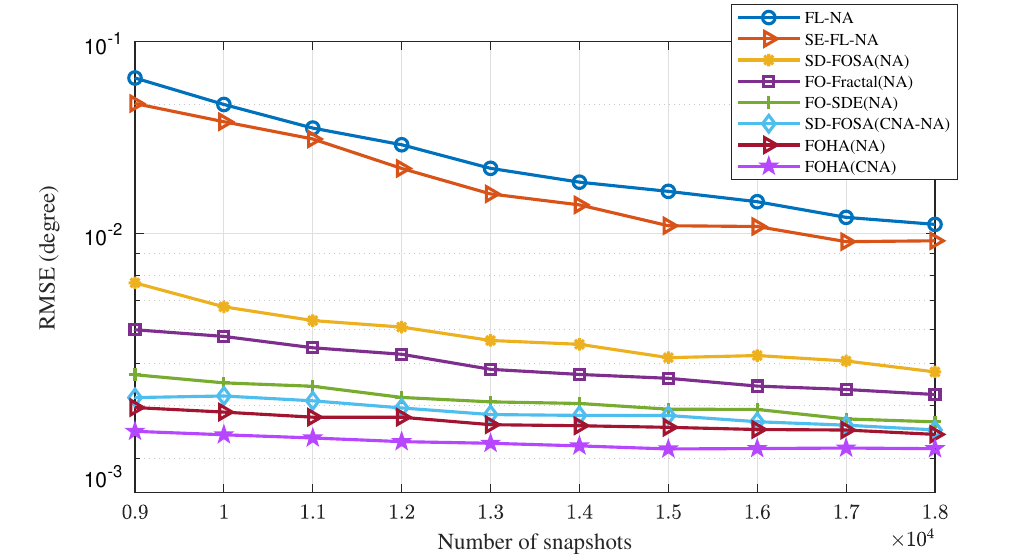}}
    \hspace{0in} 
  \subfigure[RMSE versus The number of sources]{
    \includegraphics[scale=0.34]{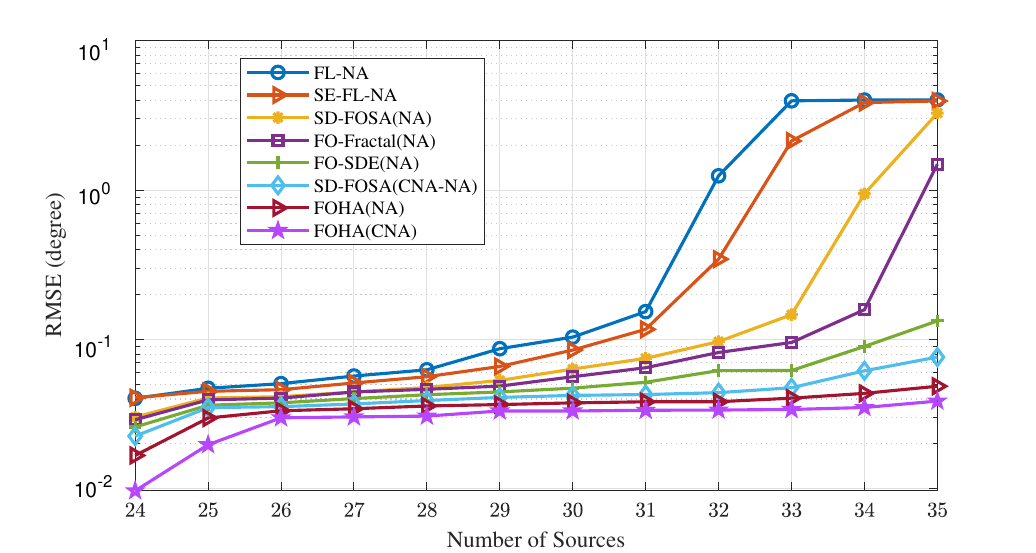}}
  \caption{DOA estimation performance without mutual coupling based on fourth-order cumulants}
  \label{fig:no-coup}
\end{figure}
We compare the DOA estimation performance versus input SNR, number of snapshots and number of sources for eight arrays
without mutual coupling in this part,
where 11 physical sensors are used to construct eight arrays.

In the first simulation, 22 uncorrelated sources are uniformly distributed over \(-85^\circ\) to \(85^\circ\), with 8000 snapshots.
The SNR varies from \(-10\) dB to \(12\) dB in 2 dB increments.
As shown in Fig.~\ref{fig:no-coup}(a), the RMSE of all arrays decreases as the SNR increases.
FOHA(CNA) achieves the lowest RMSE, followed by FOHA(NA), both outperforming the other six SLAs based on fourth-order cumulant.
In the second simulation, the number of snapshots ranges from 9000 to 18000 at a fixed SNR of 2 dB.
Fig.~\ref{fig:no-coup}(b) shows that FOHA(CNA) maintains the lowest RMSE, with FOHA(NA) consistently outperforming the other six arrays.
In the third simulation, the number of sources increases from 24 to 35.
As illustrated in Fig.~\ref{fig:no-coup}(c), FOHA(CNA) exhibits the smallest RMSE in all cases, demonstrating superior performance among the eight arrays.

\subsection{DOA Estimation With Mutual Coupling}
We compare the DOA estimation performance with mutual coupling versus input SNR,
number of snapshots and number of sources for the two FOHAs to
those of six SLAs based on fourth-order cumulant in this part,
where 11 physical sensors are used to construct eight arrays,
and the mutual coupling parameter $B$ is always set to 100 \cite{Liao2012}.

In the first simulation, 22 uncorrelated sources are uniformly distributed between \(-85^\circ\) and \(85^\circ\), with 8000 snapshots.
The SNR varies from \(-10\,\mathrm{dB}\) to \(12\,\mathrm{dB}\) in \(2\,\mathrm{dB}\) steps.
Fig.~\ref{fig:coup}(a) shows the RMSE versus SNR, where the RMSE decreases as SNR increases.
Among all arrays, FOHA(CNA) consistently achieves the lowest RMSE under mutual coupling effects, followed by FOHA(NA),
both outperforming the other six SLAs.
The second simulation investigates the impact of snapshots, varying from 9000 to 18000 at a fixed SNR of \(2\,\mathrm{dB}\).
As shown in Fig.~\ref{fig:coup}(b), FOHA(CNA) maintains the lowest RMSE, with FOHA(NA) again outperforming the other six arrays.
In the third simulation, the number of sources increases from 24 to 35.
Fig.~\ref{fig:coup}(c) shows that FOHA(CNA) consistently yields the smallest RMSE,
demonstrating superior performance across varying source counts compared to the other seven arrays.
\begin{figure}
  \centering
  \subfigure[ RMSE versus SNR]{
    \includegraphics[scale=0.34]{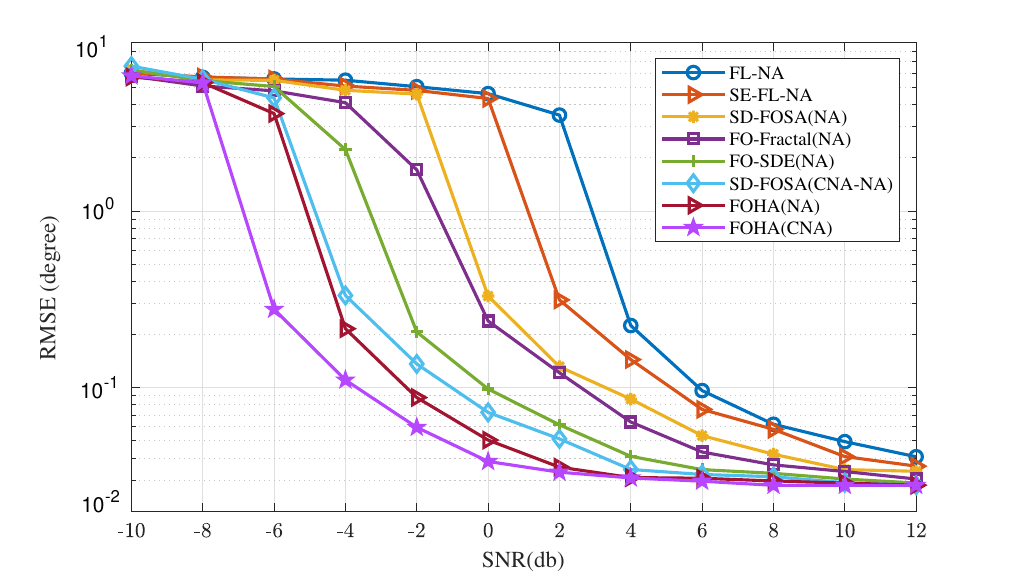}}
  \hspace{0in} 
  \subfigure[RMSE versus Snapshots]{
    \includegraphics[scale=0.34]{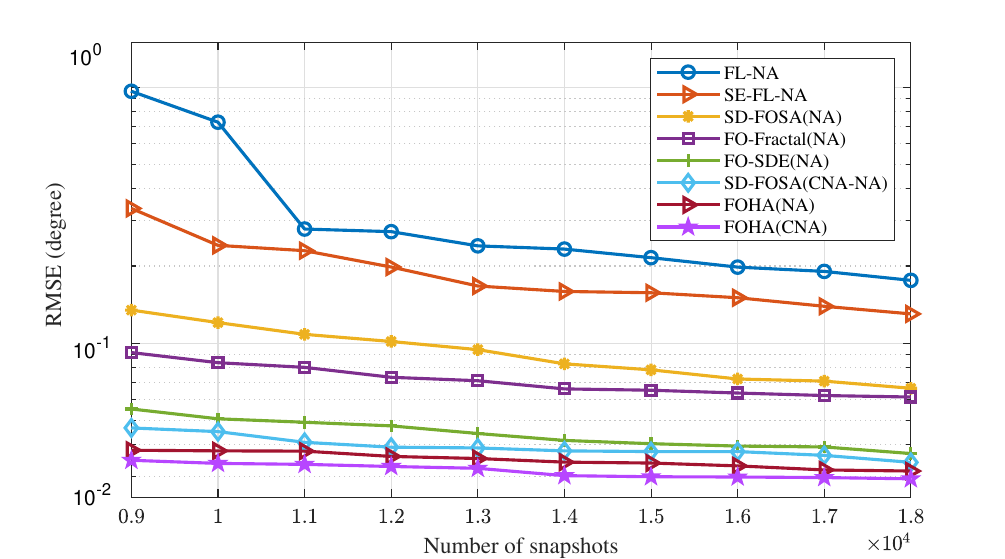}}
  \hspace{0in} 
  \subfigure[RMSE versus The number of sources]{
    \includegraphics[scale=0.34]{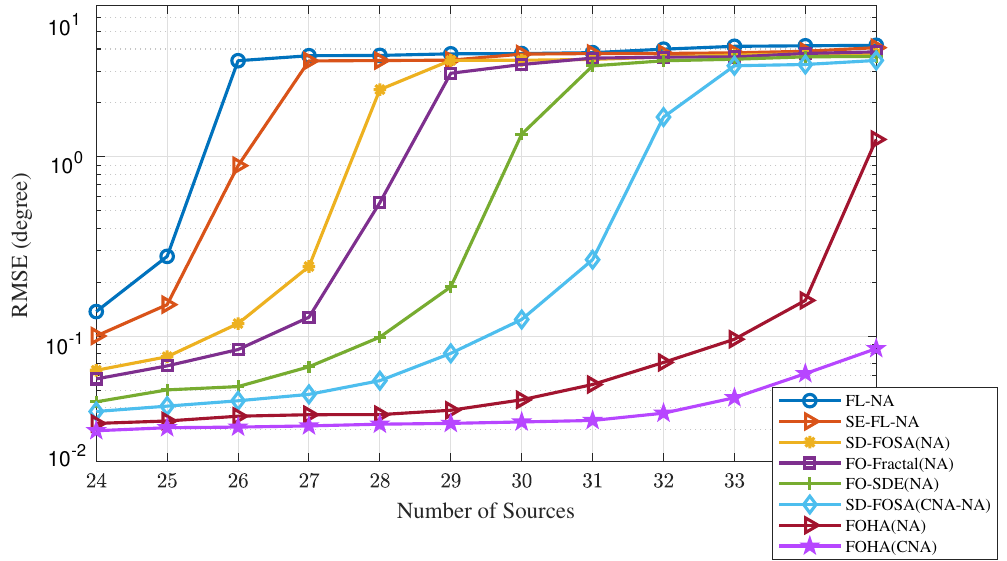}}
  \caption{DOA estimation performance with mutual coupling based on fourth-order cumulants}
  \label{fig:coup}
\end{figure}

\section{Conclusion}
This paper employs high-order statistics, specifically fourth-order cumulants,
to derive FODCA with two different forms.
Based on the resulting FODCA, a novel scheme is introduced for designing fourth-order hierarchical arrays (FOHA)
based on FODCA with two different forms.
Two SLAs, termed FOHA(NA) and FOHA(CNA),
are further proposed using the novel scheme to enhance DOFs
and suppress mutual coupling effects, thereby improving DOA estimation performance.
Closed-form expressions for the sensor locations of FOHA(NA) and FOHA(CNA) are derived through Algorithms~1 and 2, respectively.
Additionally, the concept of fourth-order redundancy is introduced,
and the relationship between the coupling leakage of the generator and
that of the resulting FOHA is derived.
Compared with existing SLAs based on FODCA,
the proposed FOHA(NA) and FOHA(CNA) exhibit reduced mutual coupling and improved spatial resolution.
Nevertheless, due to the increased estimation variance inherent in high-order statistics,
the FOHA structures require more snapshots than SLAs based on second-order statistics.
A promising future direction involves developing DOA estimation algorithms that exploit the FODCA structure more efficiently,
as suggested in~\cite{Kane2024}, to enhance practical performance.

\section*{Acknowledgment}
This work was supported by the National Natural Science Foundation of China (Grant No. 62371400).\\

\end{document}